\numberwithin{equation}{section}  
\theoremstyle{plain}
\newtheorem{theorem}{\protect\theoremname}
\theoremstyle{plain}
\theoremstyle{plain}
\newtheorem{corollary}{\protect\corollaryname}
\theoremstyle{plain}
\newtheorem{lemma}{\protect\lemmaname}
\theoremstyle{definition}
\theoremstyle{definition}
\newtheorem{definition}{\protect\definitionname}
\theoremstyle{definition}
\newtheorem{remark}{\protect\remarkname}
\theoremstyle{definition}
\providecommand{\corollaryname}{Corollary}
\providecommand{\examplename}{Example}
\providecommand{\lemmaname}{Lemma}
\providecommand{\propositionname}{Proposition}
\providecommand{\theoremname}{Theorem}
\providecommand{\definitionname}{Definition}
\providecommand{\remarkname}{Remark}
\providecommand{\acknowledgementsname}{Acknowledgements}
\journal{Finite Fields and Their Applications}
\begin{document}
	
	\begin{frontmatter}
		
		\title{Construction of three class of at most four-weight binary linear codes 
			and their applications} 
		
		
		\author[mymainaddress]{Tonghui Zhang} 
		\ead{zhangthvvs@126.com}

		\address[mymainaddress]{School of Mathematics and Statistics, Fujian Normal University,\\Fuzhou, Fujian, 350117, P. R. China}

		\author[2]{Pinhui Ke\corref{mycorrespondingauthor}}
		\cortext[mycorrespondingauthor]{Corresponding author}
		
		\ead{keph@fjnu.edu.cn}
		
		\address[2]{Key Laboratory of Analytical Mathematics and Applications (Ministry of Education),\\ Fujian Normal University, Fuzhou, Fujian, 350117, P. R. China}
		\author[3]{Zuling Chang} 
		\ead{zuling\_chang@zzu.edu.cn}
		
		\address[3]{School of Mathematics and Statistics, Zhengzhou University, \\Zhengzhou, Henan, 450001, P. R. China}

		\begin{abstract}
			
			Three classes of binary linear codes with at most four nonzero weights were constructed in this paper, in which two of them are projective three-weight codes. As applications, $s$-sum sets for any odd $ s > 1$ were constructed.
		\end{abstract}
		
		\begin{keyword}
			projective code, defining set, weight distribution, $s$-sum set
			\MSC[2010] 94B15 \sep  11T71
		\end{keyword}
		
	\end{frontmatter}
	
	\vspace{2em}
	\noindent\textbf{Note.} This manuscript has been submitted for publication.
	\section{Introduction}
	
	Denote $\mathbb{F}_{p^m}$ as the finite field with $p^m$ elements and $\mathbb{F}_{p^m}^*=\mathbb{F}_{p^m}\setminus \{0\} $, where $p$ is a prime. An [$n,k,d$] linear code $C$ of length $n$ over $\mathbb{F}_{q}$ is a $k$-dimensional linear subspace of $\mathbb{F}^n_{q}$
	with minimum (Hamming) distance $d$.
	We refer to an $[n,k,d]$ code as optimal if no $[n,k,d+1]$ code exists. Conversely, an $[n,k,d-1]$ code is almost optimal if the $[n,k,d]$ code is already optimal.
	Let $A_i$ denote the number of codewords with weight $i$ in a code of length $n$. The sequence $\left(1,A_1,A_2,\dots,A_n \right)$ is called the weight distribution of $C$ and the polynomial $1+A_1z+A_2z^2+\cdots+A_nz^n$ is called the weight enumerator of $C$. 
	The code $C$ is called $t$-weight code if the number of nonzero $A_j$ in the sequence $(A_1,A_2,\dots, A_n ) $ is equal to $t$.
	The weight distribution of a linear code reflects the information of the capabilities of error detection and correction but also allows the computing of the error probability of error detection and correction.
	The dual code $C^{\perp}$ of $C$ is defined by $C^{\perp}=\{\textbf{x}\in \mathbb{ F}_q^n:\textbf{x} \cdot \textbf{c}=0 \textup{ for all } \textbf{c} \in C \}$.
	A linear code whose dual  has the minimal distance $d^{\perp} \geq 3$ is called a projective code.
	
	Linear codes get much attention due to their applications in secret sharing schemes, strongly regular graphs, association schemes and authentication codes $\textit{etc}$. 
	In particular, projective two-weight and three-weight codes have received a lot of attention due to their close connections with finite projective spaces, strongly regular graphs and $s$-sum sets.

	Projective linear codes were constructed by using almost difference sets and two to one functions in  \cite{heng2023almostdifference} and \cite{Mes2023two to one}, respectively.
	Moreover, by using the second generic construction described in \cite{li2020}, several new classes of projective  two-weight and three-weight codes were presented in \cite{cheng2023}, \cite{heng2021}, \cite{liu2023}, \cite{zhu2023twoweight} and \cite{zhu2023 $s$-sum}.
	Also some Lee-weight projective codes were constructed in \cite{shi2017}, \cite{shi2014} and \cite{tang2023}.

	Let $\textup{Tr}_{p^m/p}$ denote the trace function from $\mathbb{F}_{p^m}$ to $\mathbb{F}_p$. 
	For a set $D=\{d_1,d_2,\dots,d_n\} \subseteq \mathbb{F}_{p^m}$, Ding and Niederreiter in \cite{ding2007} defined a generic class of linear codes of length $n=\#D$ over $\mathbb{F}_p$ as
	\begin{equation}
		C_D=\left\{\textbf{c}(a)=\left(\textup{Tr}_{p^m/p}(ad_1),\textup{Tr}_{p^m/p}(ad_2),\dots,\textup{Tr}_{p^m/p}(ad_n)\right):a\in\mathbb{F}_{p^m} \right \}.\notag
	\end{equation}
	Here $D$ is called the defining set of $C_D$.
	
	For a given positive integer $d$ and an odd prime  $p$, Zhu $ \textit{et}$ $ \textit{al}$ \cite{zhu2023twoweight} introduced the defining set 
	\begin{align*}
		D_0=\{(x,y) \in \mathbb{ F}_{p^m}^* \times  \mathbb{ F}_{p^m}: \textup{Tr}_{p^m/p}(yx^{d+1})=0 \},
	\end{align*}
	and defined the code 
	\begin{align*}
		C_{D_0}=\{\left(\textup{Tr}_{p^m/p}(ayx^d+bx)\right)_{(x,y) \in D_0} :(a,b) \in \mathbb{ F}_{p^m} \times  \mathbb{ F}_{p^m} \}.
	\end{align*} 
	Two new classes of projective two-weight codes
	were obtained and then two new classes of strongly regular graphs were given in \cite{zhu2023twoweight}. Inspired by this construction, we consider the linear codes
	\begin{equation}\label{code}
		C_{D_i}=\left \{\textbf{c}(a,b)=\left(\textup{Tr}(axy+bx)\right)_{(x,y)\in {D_i}}: a,b \in \mathbb{F}_{q^m} \right \},
	\end{equation}
where $i=1,2,3$ and the defining sets
	\begin{equation}\label{det1}
		D _1=\left\{(x,y)\in \mathbb{F}_{2^m}^* \times \mathbb{F}_{2^m} :\textup{Tr}\left(yx^2+y\right)=0\right\},
	\end{equation}
	\begin{equation}\label{det2}
		D_2=\left\{(x,y)\in \mathbb{F}_{2^m}^* \times \mathbb{F}_{2^m} :\textup{Tr}\left(yx^2+x+y\right)=0\right\},
	\end{equation}
	
	\begin{equation}\label{det3}
		D_3=\left\{(x,y)\in \mathbb{F}_{2^m}^* \times \mathbb{F}_{2^m} :\textup{Tr}\left(yx^2+xy\right)=0\right\}.
	\end{equation}
	When $m$ is even, $C_{D_2}$ and $C_{D_1}$ have the same weight distribution, therefore we only investigate
	the weight distribution of $C_{D_2}$ when $m$ is odd. 
	Our results indicate that three classes of linear codes with at most four weights and two of them are projective three-weight codes which can be used to construct $s$-sum sets.
	
	The rest of this paper is arranged as follows.
	Section 2 provides a brief summary of relevant properties of trace functions and additive characters.
	Section 3 presents some auxiliary results that will be utilized in the subsequent sections.
	The main results are outlined in Section 4, where three classes of projective codes are scrutinized.
	Section 5 demonstrates the application of these codes in $s$-sum sets construction.
	Section 6 concludes the article.

	\section{Preliminaries}
	In this section, we introduce some basic concepts of trace functions and  additive characters on finite fields. 
		
		

		\begin{definition}(\cite[Definition 2.22]{lidl1997})
			For $\alpha \in \mathbb{ F}_{2^m}$, the trace function $\textup{Tr}(\alpha) $ of  $\alpha$ over $\mathbb{ F}_{2^m}$ is defined by
			\begin{align*}
				\textup{Tr}(\alpha)=\alpha+\alpha^2+\cdots +\alpha^{2^{m-1}}.
			\end{align*} 
		\end{definition}
		
		\begin{remark}(\cite[Theorem 2.23]{lidl1997})
			This paper mainly uses the following two properties of the trace function.	
			\begin{enumerate}[(i)]
				\item $\textup{Tr}(\alpha+\beta)=\textup{Tr}(\alpha)+\textup{Tr}(\beta)$ for all $\alpha,\beta \in \mathbb{ F}_{2^m}$,
				\item  When $m$ is odd, $\textup{Tr}(1)=1$.
			\end{enumerate}
		\end{remark}
		
		Let $\zeta_p$ be a primitive $p$-th root of unity. When $p=2$, $\zeta_2=-1$. An additive character $\chi_i$ over $\mathbb{ F}_{2^m}$ to  $\mathbb{ F}_{2}$ is a homomorphism from $\mathbb{ F}_{2^m}$ into the multiplicative group composed by $\zeta_2$ of the complex number of absolute value $1$. For any $i\in\mathbb{F}_{2^m}$, $\chi_i(x)=(-1)^{\textup{Tr}(ix)}$, where $x\in\mathbb{ F}_{2^m}$. If $i=0$, $\chi_0(x)=1 $ for all $x\in\mathbb{ F}_{2^m}$ and $\chi_0$ is the trivial additive character of $\mathbb{F}_{2^m}$. If $i=1$, $\chi_1$ is called the canonical additive character of $\mathbb{F}_{2^m}$, which is recorded as $\chi$ in this paper. The orthogonality of the canonical additive character is given by
		\begin{equation}
			\sum\limits_{x \in \mathbb{F}_{2^m}}\chi(x)=\sum\limits_{x \in \mathbb{F}_{2^m}}(-1)^{\textup{Tr}(x)}=0.
			\notag
		\end{equation}

		
		In order to determine the multiplicity of each weight of the desired codes, we need the Pless power moments for linear codes. 
		
		\begin{lemma}\textup{(\cite[p.259, the Pless power moments]{huffman2010})}\label{Pless}
			Let $C$ be an $[n,k]$ code over $\mathbb{F}_q$ with weight distribution $(1,A_1,\dots,A_n)$ and $C^{\perp}$ be its dual code with weight distribution $(1,A^{\perp}_1,\dots,A^{\perp}_n)$, the first three Pless power moments are as follows,
			\begin{align*}
				&\sum^n_{j=0}A_j=q^k,\\
				&\sum^n_{j=0}jA_j=q^{k-1}(qn-n-A^{\perp}_1),\\
				&\sum^n_{j=0}j^2A_j=q^{k-2}\big((q-1)n(qn-n+1)-(2qn-q-2n+2)A_1^{\perp}+2A_2^{\perp}  \big).
			\end{align*}
		\end{lemma}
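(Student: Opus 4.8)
The plan is to derive the three power moments directly from the orthogonality of additive characters, so that nothing beyond what has already been recalled is needed (no appeal to the MacWilliams identities). Fix a nontrivial additive character $\chi$ of $\mathbb{F}_q$; then $\sum_{\lambda\in\mathbb{F}_q}\chi(\lambda t)$ equals $q$ if $t=0$ and $0$ otherwise, while $\sum_{\mathbf{c}\in C}\chi(\mathbf{u}\cdot\mathbf{c})$ equals $q^k$ if $\mathbf{u}\in C^{\perp}$ and $0$ otherwise. Combining these, for every coordinate set $S\subseteq\{1,\dots,n\}$ one obtains
\[
\#\{\mathbf{c}\in C:\ c_i=0\ \text{for all}\ i\in S\}=\frac{1}{q^{|S|}}\sum_{\lambda\in\mathbb{F}_q^{S}}\ \sum_{\mathbf{c}\in C}\chi\Big(\sum_{i\in S}\lambda_i c_i\Big)=q^{\,k-|S|}N_S,
\]
where $N_S$ denotes the number of codewords of $C^{\perp}$ whose support is contained in $S$ (the inner sum over $\mathbf{c}$ is $q^k$ precisely when the zero-extension of $\lambda$ lies in $C^{\perp}$). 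The zeroth moment $\sum_{j}A_j=q^k$ is just the cardinality of $C$.

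For the first moment I would use $\sum_{j}jA_j=\sum_{\mathbf{c}\in C}\mathrm{wt}(\mathbf{c})=\sum_{i=1}^{n}\#\{\mathbf{c}\in C:\ c_i\neq0\}=\sum_{i=1}^{n}\bigl(q^k-q^{k-1}N_{\{i\}}\bigr)$. A codeword of $C^{\perp}$ supported inside a singleton has weight $\le 1$, the zero word being counted once for each of the $n$ singletons and each weight-one word exactly once, so $\sum_{i}N_{\{i\}}=n+A_1^{\perp}$; this gives $\sum_{j}jA_j=q^{k-1}(qn-n-A_1^{\perp})$.

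For the second moment I would expand $\mathrm{wt}(\mathbf{c})^2=\sum_{i,i'}[c_i\neq0]\,[c_{i'}\neq0]$ and split the double sum into the diagonal $i=i'$, which recovers the first moment already computed, and the $n(n-1)$ ordered off-diagonal pairs. On each such pair, inclusion--exclusion together with the displayed identity gives $\#\{\mathbf{c}:\ c_i\neq0,\ c_{i'}\neq0\}=q^k-q^{k-1}(N_{\{i\}}+N_{\{i'\}})+q^{k-2}N_{\{i,i'\}}$. Summing yields $\sum_{i\neq i'}(N_{\{i\}}+N_{\{i'\}})=2(n-1)(n+A_1^{\perp})$, and since a codeword of $C^{\perp}$ supported inside a $2$-set has weight $\le 2$, a short count of how often the zero word, a weight-one word and a weight-two word are each seen among ordered pairs gives $\sum_{i\neq i'}N_{\{i,i'\}}=n(n-1)+2(n-1)A_1^{\perp}+2A_2^{\perp}$. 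Substituting everything back and pulling out $q^{k-2}$ produces the claimed expression; the same template, expanding $\mathrm{wt}(\mathbf{c})^{\nu}$ over ordered $\nu$-tuples of coordinates, delivers all higher Pless power moments.

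The only genuinely delicate point is the final algebraic collection in the second-moment step: one must check that the terms free of $A^{\perp}$ sum to $(q-1)n(qn-n+1)$ and that the coefficient of $A_1^{\perp}$ is exactly $-(2qn-q-2n+2)$. This becomes transparent once the $A^{\perp}$-free part is grouped as $n(n-1)(q-1)^2+qn(q-1)$ and the $A_1^{\perp}$-part as $-A_1^{\perp}\bigl(2(n-1)(q-1)+q\bigr)$, but it is the place where an arithmetic slip would most easily go unnoticed, so I would carry out that simplification with care.
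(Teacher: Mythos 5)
Your proof is correct; I checked the assembly in the second-moment step and the constants do come out right: the $A^{\perp}$-free part is $q(qn-n)+q^2n(n-1)-2qn(n-1)+n(n-1)=n(q-1)\bigl(n(q-1)+1\bigr)=(q-1)n(qn-n+1)$, the coefficient of $A_1^{\perp}$ is $-q-2q(n-1)+2(n-1)=-\bigl(2qn-q-2n+2\bigr)$, and the coefficient of $A_2^{\perp}$ is $2$, exactly as claimed. Note, however, that the paper does not prove this lemma at all: it is quoted verbatim from Huffman and Pless, where the power moments are obtained from the MacWilliams identities (equivalently from the binomial/Krawtchouk moment identities relating $A_j$ and $A_j^{\perp}$). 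Your route is genuinely different in that it is self-contained and elementary: the identity $\#\{\mathbf{c}\in C:\ c_i=0\ \text{for }i\in S\}=q^{k-|S|}N_S$, proved by character orthogonality, is really a localized form of MacWilliams duality, and combining it with inclusion--exclusion over singletons and ordered pairs yields the first three moments directly, with the closing remark correctly indicating how expanding $\mathrm{wt}(\mathbf{c})^{\nu}$ over ordered $\nu$-tuples gives the general moment (at the cost of increasingly heavy bookkeeping, which is what the generating-function/MacWilliams derivation packages more cleanly). What your approach buys is transparency and independence from the MacWilliams machinery, using only the character-sum tools already recalled in the paper; what the cited derivation buys is a uniform closed formula for all moments at once.
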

		
		\section{Some auxiliary results}
		In this section, we  give some important auxiliary results which will be used in the sequel.
		Obviously, the codeword is the zero codeword when $(a,b)=(0,0)$. Hence we will default $a$ and $b$ not to be zero at the same time in the following.

		\begin{lemma} \label{a}
			For $a \in \mathbb{F}_{2^m}^*$, suppose that $f(x)=x^2+ax+1 \in \mathbb{F}_{2^m}[x]$ is a function from  $\mathbb{ F}_{2^m}$ to $\mathbb{ F}_{2^m}$, then 
			$f(x)=0$ has two nonzero solutions if and only if $a =r+r^{-1}$, where $r \in \mathbb{F}_{2^m}^* \setminus \{1\}$ and $r^{-1}$ represents the inverse of $r$.
		\end{lemma}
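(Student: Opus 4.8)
The plan is to analyze the quadratic $f(x)=x^2+ax+1$ over $\mathbb{F}_{2^m}$ directly, exploiting the special structure of characteristic two. First I would observe that $x=0$ is never a root since $f(0)=1\neq 0$, so any root is automatically nonzero; thus it suffices to show that $f$ splits into two distinct linear factors over $\mathbb{F}_{2^m}$ if and only if $a=r+r^{-1}$ for some $r\in\mathbb{F}_{2^m}^*\setminus\{1\}$. Note also that since $a\neq 0$, $f$ cannot have a repeated root (a repeated root of $x^2+ax+1$ would force $a=0$), so "two solutions" really means "two distinct solutions," i.e. $f$ splits over $\mathbb{F}_{2^m}$.

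For the easy direction, suppose $a=r+r^{-1}$ with $r\in\mathbb{F}_{2^m}^*\setminus\{1\}$. Then I would simply verify by expansion that $f(x)=(x+r)(x+r^{-1})$, using $r\cdot r^{-1}=1$ and $r+r^{-1}=a$; the roots are $r$ and $r^{-1}$, which are distinct because $r\neq 1$ implies $r\neq r^{-1}$ (as $r=r^{-1}\iff r^2=1\iff r=1$ in characteristic two), and both are nonzero. For the converse, suppose $f$ has two distinct roots $r_1,r_2\in\mathbb{F}_{2^m}^*$. Comparing coefficients in $f(x)=(x+r_1)(x+r_2)=x^2+(r_1+r_2)x+r_1 r_2$ gives $r_1 r_2=1$ and $r_1+r_2=a$. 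Hence $r_2=r_1^{-1}$, so $a=r_1+r_1^{-1}$, and $r_1\neq 1$ since $r_1=1$ would force $r_2=1=r_1$, contradicting distinctness. Setting $r=r_1$ finishes this direction.

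There is essentially no hard step here; the only thing to be careful about is the correct reading of "two nonzero solutions" in characteristic two — specifically that it must mean two distinct solutions (the discriminant-type obstruction $a\neq 0$ rules out a double root), and that automatically these solutions lie in $\mathbb{F}_{2^m}^*$ rather than needing a separate nonvanishing argument. If one prefers a more intrinsic phrasing of the converse, one can instead recall that $x^2+ax+1$ with $a\neq 0$ has roots in $\mathbb{F}_{2^m}$ iff the substitution $x=ay$ turns it into $y^2+y+a^{-2}$, which is solvable iff $\mathrm{Tr}(a^{-2})=0$; but the factorization argument above is cleaner and already yields the explicit parametrization $a=r+r^{-1}$ demanded by the statement, so I would present that one.
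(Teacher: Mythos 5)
Your proof is correct and follows essentially the same route as the paper's, namely comparing coefficients of the factorization (Vieta's theorem) to get $r_1r_2=1$ and $r_1+r_2=a$ in one direction and verifying $(x+r)(x+r^{-1})=x^2+ax+1$ in the other. You are somewhat more careful than the paper about the side conditions (that $0$ is never a root, that $a\neq 0$ forces distinct roots, and that $r\neq 1$ follows from distinctness since $r=r^{-1}$ iff $r=1$ in characteristic two), but the underlying argument is the same.
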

		
		\begin{proof}
			Let  $x_1$, $x_2 \in \mathbb{F}_{2^m}^* \setminus \{1\} $ be the nonzero solutions of $f(x)=0$, and according to Vieta's Theorem, $x_1+x_2=a$, $x_1x_2=1$, so that $a=x_1+x_1^{-1}$.
			Conversely, if $a =r+r^{-1}$,  then obviously  $r$ and $r^{-1}$ are two nonzero solutions of $f(x)=0$. 
		\end{proof}
		
		\begin{corollary}\label{a1}
			Let $\Upsilon_1$ be the set of all $a$ that cause $f(x)$ in Lemma \ref{a}
			to have nonzero solutions, then $\Upsilon_1$ can also be expressed as
			\begin{equation}
				\Upsilon_1=\{a=r+r^{-1},a \in \mathbb{F}_{2^m}^* \}.\notag
			\end{equation}
			Simple verification shows that $\big|\Upsilon_1\big|=2^{m-1}-1$.
		\end{corollary}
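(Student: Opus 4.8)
The plan is to prove Corollary \ref{a1} as a direct consequence of Lemma \ref{a}, together with a short counting argument. By Lemma \ref{a}, an element $a \in \mathbb{F}_{2^m}^*$ lies in $\Upsilon_1$ precisely when $a = r + r^{-1}$ for some $r \in \mathbb{F}_{2^m}^* \setminus \{1\}$. First I would observe that the restriction $r \neq 1$ is automatic for $a \neq 0$: if $r = 1$ then $r + r^{-1} = 0 \notin \mathbb{F}_{2^m}^*$, so writing $\Upsilon_1 = \{\, r + r^{-1} : r \in \mathbb{F}_{2^m}^* \,\}$ adds only the value $0$, which is excluded by the condition $a \in \mathbb{F}_{2^m}^*$ in the displayed description. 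Hence the two descriptions of $\Upsilon_1$ coincide, establishing the first assertion.

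For the cardinality, I would count the fibres of the map $\varphi \colon \mathbb{F}_{2^m}^* \to \mathbb{F}_{2^m}$, $r \mapsto r + r^{-1}$, whose image (minus $0$) is $\Upsilon_1$. For a fixed $a$, the equation $r + r^{-1} = a$ is equivalent, after multiplying by $r$, to $r^2 + ar + 1 = 0$, i.e. to $f(r) = 0$ with $f$ as in Lemma \ref{a}. A quadratic over $\mathbb{F}_{2^m}$ has either $0$ or $2$ roots (counted without multiplicity) when its discriminant-type invariant is nonzero, and exactly $1$ root when it is a perfect square; here $f(r) = (r + r_0)^2$ would force the constant term $r_0^2 = 1$, hence $r_0 = 1$ and $a = 0$. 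Therefore for every $a \in \Upsilon_1$ (so $a \neq 0$) the fibre $\varphi^{-1}(a)$ has size exactly $2$, namely $\{r, r^{-1}\}$ with $r \neq r^{-1}$. Since $\varphi(1) = 0$ and $1$ is the unique element of $\mathbb{F}_{2^m}^*$ mapping to $0$ (as $r = r^{-1} \iff r^2 = 1 \iff r = 1$ in characteristic $2$), the $2^m - 2$ elements of $\mathbb{F}_{2^m}^* \setminus \{1\}$ are partitioned into fibres of size $2$, each mapping onto a distinct element of $\Upsilon_1$. This gives $|\Upsilon_1| = (2^m - 2)/2 = 2^{m-1} - 1$.

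The only mild subtlety, and the one place where a careless argument could slip, is making sure the pairing $r \leftrightarrow r^{-1}$ never degenerates: one must check $r \neq r^{-1}$ for all $r \in \mathbb{F}_{2^m}^* \setminus \{1\}$, which holds because $r = r^{-1}$ forces $r^2 = 1$ and the Frobenius $x \mapsto x^2$ is injective, so $r = 1$. With that in hand the fibre structure is clean and the count is immediate. I expect this to be the main (indeed essentially the only) obstacle, and it is a one-line verification; everything else is bookkeeping built on Lemma \ref{a}.
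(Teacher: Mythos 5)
Your proof is correct and matches the verification the paper leaves implicit (the paper only says ``simple verification''): the set description is immediate from Lemma \ref{a}, and the count follows from the fact that $r \mapsto r+r^{-1}$ is exactly two-to-one on $\mathbb{F}_{2^m}^*\setminus\{1\}$ because the roots of $x^2+ax+1$ pair as $\{r,r^{-1}\}$ with $r \neq r^{-1}$ when $a \neq 0$. Your attention to the non-degeneracy of the pairing $r \leftrightarrow r^{-1}$ is exactly the right detail, and the resulting count $(2^m-2)/2 = 2^{m-1}-1$ agrees with the corollary.
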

		
		\begin{corollary}\label{a2}
			Suppose $g(x)=x^2+(a+1)x $ to be the function from $\mathbb{ F}_{2^m}$ to $\mathbb{ F}_{2^m}$, for $a \in \mathbb{F}_{2^m}^*$, the following results hold.
			\begin{enumerate}[(i)]
				\item  $g(x)=0$ has a nonzero solution if and only if $a \ne 1 $.
				\item  	Define that $\Upsilon_2$ be the set of all $a$ that cause $g(x)$ to have nonzero solutions, then 
				$ \Upsilon_2=\{a \in \mathbb{F}_{2^m}^*\setminus\{1\} \}	$, obviously, $|\Upsilon_2|=2^{m}-2$.
			\end{enumerate}	
		\end{corollary}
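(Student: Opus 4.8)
The plan is to argue directly by factoring $g$, exploiting that it has no constant term. Over $\mathbb{F}_{2^m}$ we can write $g(x) = x^2 + (a+1)x = x\bigl(x + (a+1)\bigr)$, so the solution set of $g(x) = 0$ is exactly $\{0,\ a+1\}$; in characteristic $2$ there is no sign to track, since $-(a+1) = a+1$. This factorization is all that is really needed.

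For part (i), the description above shows that $g$ has a nonzero root precisely when $a + 1 \neq 0$. As we are working in characteristic $2$, the condition $a + 1 = 0$ is the same as $a = 1$, so $g(x) = 0$ has a nonzero solution if and only if $a \neq 1$, which is the assertion. (When $a = 1$ the polynomial degenerates to $x^2$, whose only root is the double root $x = 0$, consistent with the claim.) For part (ii), combining (i) with the standing hypothesis $a \in \mathbb{F}_{2^m}^*$ gives at once $\Upsilon_2 = \{a \in \mathbb{F}_{2^m}^* : a \neq 1\} = \mathbb{F}_{2^m}^* \setminus \{1\}$, and then the cardinality follows by counting: $|\Upsilon_2| = |\mathbb{F}_{2^m}^*| - 1 = (2^m - 1) - 1 = 2^m - 2$.

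There is essentially no obstacle: the corollary is an immediate consequence of the factorization $g(x) = x(x + a + 1)$. The only point deserving a moment's care is the characteristic-$2$ bookkeeping, namely that $x + (a+1)$ and $x - (a+1)$ coincide and that the vanishing condition "$a + 1 = 0$" must be translated as "$a = 1$" (not "$a = -1$"). For uniformity of presentation with Lemma \ref{a} and Corollary \ref{a1}, one could alternatively phrase the proof through Vieta-type reasoning, but the direct factorization is shorter and self-contained.
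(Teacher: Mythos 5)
Your proof is correct: factoring $g(x)=x\bigl(x+(a+1)\bigr)$ and noting that in characteristic $2$ the condition $a+1=0$ means $a=1$ gives exactly the stated equivalence and the count $|\Upsilon_2|=2^m-2$. The paper states this corollary without proof as an evident fact, and your argument is precisely the routine verification it has in mind, so there is nothing to add.
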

		
		All proofs of lemmas below are mainly concerned with the orthogonality of  additive characters.
		
		\begin{lemma}\label{T}
			Let $\mathbb{E} $ be a subset of $\mathbb{F}_{2^m}^*$ and $T_i=\big|\{(e,b) \in \mathbb{E} \times \mathbb{F}_{2^m}^*:\textup{Tr}(eb)=i\}\big|$, $i=0,1$.  It is easy to know that
			\begin{align*}
				T_i=
				\begin{cases}
					\big|\mathbb{E} \big|(2^{m-1}-1) ,&  \textup{if } i=0,\\
					\big|\mathbb{E} \big|2^{m-1} ,&  \textup{if } i=1.
				\end{cases}
			\end{align*}	
		\end{lemma}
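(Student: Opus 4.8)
The plan is to fix $e \in \mathbb{E}$ and count, for that single $e$, the number of $b \in \mathbb{F}_{2^m}^*$ with $\textup{Tr}(eb) = i$, and then sum over $e \in \mathbb{E}$. Since every $e \in \mathbb{E}$ is nonzero, the map $b \mapsto eb$ is a bijection of $\mathbb{F}_{2^m}$ onto itself that fixes $0$, so it restricts to a bijection of $\mathbb{F}_{2^m}^*$. Consequently the inner count does not depend on $e$ at all: it is governed only by how the trace function distributes its two values over $\mathbb{F}_{2^m}^*$.

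To extract that distribution I would invoke the orthogonality relation for the canonical additive character recorded in Section~2. Writing the indicator of the event $\textup{Tr}(eb) = i$ as $\tfrac{1}{2}\bigl(1 + (-1)^i(-1)^{\textup{Tr}(eb)}\bigr)$, one gets
\begin{align*}
T_i = \frac{1}{2}\sum_{e \in \mathbb{E}}\sum_{b \in \mathbb{F}_{2^m}^*}\Bigl(1 + (-1)^i(-1)^{\textup{Tr}(eb)}\Bigr).
\end{align*}
The constant term contributes $\tfrac{1}{2}|\mathbb{E}|(2^m - 1)$. For the character term, fix $e \ne 0$, extend the sum over $b$ from $\mathbb{F}_{2^m}^*$ to all of $\mathbb{F}_{2^m}$, and subtract the $b = 0$ contribution; since $\sum_{b \in \mathbb{F}_{2^m}}(-1)^{\textup{Tr}(eb)} = 0$ by orthogonality (equivalently, $b \mapsto \textup{Tr}(eb)$ is balanced), the inner sum equals $-1$. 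Hence the character term contributes $-\tfrac{1}{2}(-1)^i|\mathbb{E}|$, and altogether $T_i = \tfrac{1}{2}|\mathbb{E}|\bigl(2^m - 1 - (-1)^i\bigr)$. Substituting $i = 0$ yields $|\mathbb{E}|(2^{m-1} - 1)$ and $i = 1$ yields $|\mathbb{E}|\,2^{m-1}$, as claimed.

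As the lemma statement itself signals, this is a routine counting identity — it boils down to the facts that multiplication by a nonzero field element permutes $\mathbb{F}_{2^m}^*$ and that the trace map is balanced — so I do not anticipate a genuine obstacle. The only point requiring minor care is the off-by-one arising from the exclusion of $b = 0$, and this is precisely what accounts for the discrepancy between the $i = 0$ and $i = 1$ cases.
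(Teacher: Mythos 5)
Your proof is correct and follows essentially the same route as the paper: the paper's expansion $T_i=\tfrac{1}{2}\sum_{z\in\mathbb{F}_2}\sum_{e,b}(-1)^{(\textup{Tr}(eb)-i)z}$ is exactly your indicator identity $\tfrac{1}{2}\bigl(1+(-1)^i(-1)^{\textup{Tr}(eb)}\bigr)$, and both arguments then reduce the character sum over $b\in\mathbb{F}_{2^m}^*$ to $-1$ via orthogonality. Nothing further is needed.
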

		\begin{proof}
			\begin{align*}
				T_i
				&=\dfrac{1}{2}\sum\limits_{z\in \mathbb{F}_2}\sum\limits_{e\in \mathbb{E}}\sum\limits_{b\in \mathbb{F}_{2^m}^*}(-1)^{ \left( \textup{Tr}(eb)-i\right) z}\\
				&=\dfrac{1}{2}|\mathbb{E}|(2^m-1)+\dfrac{1}{2}(-1)^{-i}\sum\limits_{e\in \mathbb{E}}\sum\limits_{b\in \mathbb{F}_{2^m}^*}(-1)^{\textup{Tr}(eb)}\\
				&=\dfrac{1}{2}|\mathbb{E}|(2^m-1)-\dfrac{1}{2}(-1)^{-i}|\mathbb{E}|\\
				&=\begin{cases}
					\big|\mathbb{E} \big|(2^{m-1}-1) ,&  \textup{if } i=0,\\
					\big|\mathbb{E} \big|2^{m-1} ,&  \textup{if } i=1.
				\end{cases}
			\end{align*}
		\end{proof}

		\begin{lemma}\label{S1}
			Define 
			$S_1=\sum\limits_{x\in \mathbb{F}_{2^m}^*}\sum\limits_{y\in \mathbb{F}_{2^m}}(-1)^{ \textup{Tr}(axy+bx)}$, then
			\begin{align*}
				S_1=\begin{cases}
					-2^m ,& \textup{if } a=0,b \ne 0,\\
					0    ,&  \textup{otherwise}.
				\end{cases}
			\end{align*}
		\end{lemma}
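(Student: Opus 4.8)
The plan is to evaluate the double sum by summing over $y$ first, exploiting the orthogonality of the canonical additive character recalled in Section 2. For a fixed $x\in\mathbb{F}_{2^m}^*$, write the inner sum as $\sum_{y\in\mathbb{F}_{2^m}}(-1)^{\textup{Tr}((ax)y)}$. Since $y\mapsto \textup{Tr}((ax)y)$ is a linear functional on $\mathbb{F}_{2^m}$, this inner sum equals $2^m$ when $ax=0$ and $0$ otherwise. Because $x\neq 0$, the condition $ax=0$ is equivalent to $a=0$, so the whole computation naturally splits into the two cases $a\neq 0$ and $a=0$.

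First I would treat the case $a\neq 0$: then $ax\neq 0$ for every $x\in\mathbb{F}_{2^m}^*$, so the inner sum over $y$ vanishes for each such $x$, giving $S_1=0$; this covers the ``otherwise'' alternative together with what follows. Next, for $a=0$, each inner sum over $y$ contributes $2^m$, hence $S_1=2^m\sum_{x\in\mathbb{F}_{2^m}^*}(-1)^{\textup{Tr}(bx)}$. Here I would invoke the standing convention (stated at the start of Section 3) that $a$ and $b$ are not simultaneously zero, so $a=0$ forces $b\neq 0$. Then $\sum_{x\in\mathbb{F}_{2^m}^*}(-1)^{\textup{Tr}(bx)} = \sum_{x\in\mathbb{F}_{2^m}}(-1)^{\textup{Tr}(bx)} - 1 = 0 - 1 = -1$, again by orthogonality applied to the nontrivial character $x\mapsto(-1)^{\textup{Tr}(bx)}$, so $S_1=-2^m$, as claimed.

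There is essentially no hard step here; the argument is a routine two-stage character-sum evaluation. The only point requiring care is not overlooking the convention that $(a,b)\neq(0,0)$: without it, the sub-case $a=b=0$ would contribute $2^m(2^m-1)$ rather than being excluded, and the stated formula would be incomplete. Making that dependence explicit in the write-up is the one thing I would be careful about.
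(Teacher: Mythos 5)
Your proof is correct and follows essentially the same route as the paper: sum over $y$ first using orthogonality of the additive character, so the inner sum vanishes for $a\neq 0$, and for $a=0$ (hence $b\neq 0$ by the standing convention) the sum reduces to $2^m\sum_{x\in\mathbb{F}_{2^m}^*}(-1)^{\textup{Tr}(bx)}=-2^m$. Your explicit remark about relying on the convention $(a,b)\neq(0,0)$ is a sensible point of care, but otherwise there is nothing to add.
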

		\begin{proof}
			\begin{enumerate}[(i)]
				\item When $a=0$, $b \ne 0$,
				\begin{align*}
					S_1&=2^m\sum\limits_{x\in \mathbb{F}_{2^m}^*}(-1)^{ \textup{Tr}(bx)} = -2^m.
				\end{align*}
				\item When $a \ne 0$, $b \in \mathbb{F}_{2^m} $,
				\begin{align*}
					S_1&=\sum\limits_{x\in \mathbb{F}_{2^m}^*}(-1)^{ \textup{Tr}(bx)}\sum\limits_{y\in \mathbb{F}_{2^m}}(-1)^{ \textup{Tr}(axy)}=0.
				\end{align*}
			\end{enumerate}
			The proof is completed.	
		\end{proof}
		
		\begin{lemma}\label{S2}
			Define	$S_2=\sum\limits_{x\in \mathbb{F}_{2^m}^*}\sum\limits_{y\in \mathbb{F}_{2^m}}(-1)^{\textup{Tr}\left(yx^2+y\right)+ \textup{Tr}(axy+bx)  }  $
			then	
			\begin{align*}
				S_2=\begin{cases}
					-2^m ,&  \textup{if }  a=0,  b \in \mathbb{F}_{2^m}^*, \textup{Tr}(b)=1,\\
					0 	,& \textup{if }   a \notin \Upsilon_1  \cup \{0\}, b \in \mathbb{F}_{2^m} \\
					&	\textup{or } a \in \Upsilon_1, b \in \mathbb{F}_{2^m}^*, \textup{Tr}\left( ab\right) =1,\\
					2^m ,& \textup{if }  a=0 , b \in \mathbb{F}_{2^m}^*,\textup{Tr}(b)=0,\\
					2^{m+1} ,& \textup{if }  a \in \Upsilon_1, b =0,\\
					-2^{m+1} \textup{ or } 2^{m+1} ,&  \textup{if }  a \in \Upsilon_1, b \in \mathbb{F}_{2^m}^*, \textup{Tr}\left( ab\right) =0.
				\end{cases}
			\end{align*}
			
		\end{lemma}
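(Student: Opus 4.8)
The plan is to collapse the inner sum over $y$ first, reducing $S_2$ to a short sum over the roots of a quadratic, and then to read off all six cases according to whether $a=0$, $a\notin\Upsilon_1\cup\{0\}$, or $a\in\Upsilon_1$.

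First I would use additivity of the trace to merge the two exponents and isolate the variable $y$:
\[
S_2=\sum_{x\in\mathbb{F}_{2^m}^*}(-1)^{\textup{Tr}(bx)}\sum_{y\in\mathbb{F}_{2^m}}(-1)^{\textup{Tr}\left(y(x^2+ax+1)\right)}.
\]
By the orthogonality relation for the canonical additive character, the inner sum equals $2^m$ when $x^2+ax+1=0$ and $0$ otherwise; moreover any root of $x^2+ax+1$ is automatically nonzero, so the restriction $x\in\mathbb{F}_{2^m}^*$ imposes nothing and
\[
S_2=2^m\sum_{\substack{x\in\mathbb{F}_{2^m}\\ x^2+ax+1=0}}(-1)^{\textup{Tr}(bx)}.
\]

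If $a=0$ the equation reads $x^2=1$, which has the unique solution $x=1$ since squaring is a bijection of $\mathbb{F}_{2^m}$, so $S_2=2^m(-1)^{\textup{Tr}(b)}$; because $(a,b)\neq(0,0)$ this forces $b\in\mathbb{F}_{2^m}^*$, and splitting on $\textup{Tr}(b)\in\{0,1\}$ produces the values $2^m$ and $-2^m$. If $a\neq0$, then by Lemma \ref{a} together with Corollary \ref{a1} the quadratic has roots in $\mathbb{F}_{2^m}$ exactly when $a\in\Upsilon_1$, in which case the two roots are the distinct elements $r$ and $r^{-1}$ with $a=r+r^{-1}$; when $a\notin\Upsilon_1\cup\{0\}$ there are no roots and the sum is empty, so $S_2=0$.

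It remains to handle $a\in\Upsilon_1$, where
\[
S_2=2^m\left((-1)^{\textup{Tr}(br)}+(-1)^{\textup{Tr}(br^{-1})}\right).
\]
The key observation is the identity $\textup{Tr}(ab)=\textup{Tr}(br)+\textup{Tr}(br^{-1})$, immediate from $a=r+r^{-1}$ and additivity of the trace. Hence, if $\textup{Tr}(ab)=1$ the two summands have opposite signs and cancel, giving $S_2=0$; if $b=0$ both summands equal $1$ and $S_2=2^{m+1}$; and if $b\in\mathbb{F}_{2^m}^*$ with $\textup{Tr}(ab)=0$ the two summands are equal, so $S_2=2^{m+1}$ or $-2^{m+1}$ according as $\textup{Tr}(br)=0$ or $1$, which is exactly why both values must be listed. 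Assembling the cases yields the stated table. I do not expect a genuine obstacle: everything is elementary, and the only point demanding care is spotting the identity $\textup{Tr}(ab)=\textup{Tr}(br)+\textup{Tr}(br^{-1})$, which is precisely what converts the condition on the roots of $x^2+ax+1$ into the condition on $\textup{Tr}(ab)$ appearing in the statement.
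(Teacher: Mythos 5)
Your proposal is correct and follows essentially the same route as the paper: isolate the inner sum over $y$, use orthogonality to reduce to the roots of $x^2+ax+1$, and exploit $x_1+x_2=a$ (equivalently $\textup{Tr}(ab)=\textup{Tr}(bx_1)+\textup{Tr}(bx_2)$) to sort the cases by $\textup{Tr}(ab)$. The only difference is organizational: you collapse all cases uniformly before splitting on $a$, while the paper treats $a=0$, $b=0$, and $a,b\neq 0$ separately, with the same conclusions.
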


		\begin{proof}
			\begin{enumerate}[(i)]
				\item When $a =0 $, $b \ne 0$,
				\begin{align*}
					S_2&=\sum\limits_{x\in \mathbb{F}_{2^m}^*}(-1)^{\textup{Tr}(bx)}\sum\limits_{y\in \mathbb{F}_{2^m}}(-1)^{\textup{Tr}\left((x^2+1)y\right)  }\\
						&=(-1)^{\textup{Tr}(b)}2^m
						=\begin{cases}
							2^m , & \textup{if } \textup{Tr}(b)=0,\\
							-2^m , & \textup{if } \textup{Tr}(b)=1.
						\end{cases}	
					\end{align*}	
					
					\item When $a \ne 0 $, $b = 0$,
					\begin{align*}
						S_2
						&=\sum\limits_{x\in \mathbb{F}_{2^m}^*}\sum\limits_{y\in \mathbb{F}_{2^m}}(-1)^{\textup{Tr}\left(yx^2+y\right)+ \textup{Tr}(axy)  } \\
						&=\sum\limits_{x\in\mathbb{F}_{2^m}^*}\sum\limits_{y\in \mathbb{F}_{2^m}}(-1)^{\textup{Tr}\left((x^2+ax+1)y\right)  }\\
						&=\begin{cases}
							2^{m+1} , &  \textup{if } a \in \Upsilon_1,\\
							0       , &  \textup{if } a \notin \Upsilon_1.
						\end{cases}
					\end{align*}
					
					\item\label{same} When $a  \ne 0 $, $b \ne 0$,
					
					\begin{align*}
						S_2&=	\sum\limits_{x\in \mathbb{F}_{2^m}^*}\sum\limits_{y\in \mathbb{F}_{2^m}}(-1)^{\textup{Tr}\left(yx^2+y\right)+ \textup{Tr}(axy+bx)  }\\
						&=	\sum\limits_{x\in \mathbb{F}_{2^m}^*}(-1)^{\textup{Tr}\left(bx\right)}\sum\limits_{y\in \mathbb{F}_{2^m}}(-1)^{\textup{Tr}\left((x^2+ax+1)y\right) }\\
						&=\begin{cases}
							\left( (-1)^{\textup{Tr}\left(bx_1\right)}+ (-1)^{\textup{Tr}\left(bx_2\right)}\right) 2^m , & \textup{if } a \in \Upsilon_1,\\
							0  ,& \textup{if } a \notin \Upsilon_1,
						\end{cases}	
					\end{align*}	
					where $x_1$ and $x_2$ are two nonzero solutions of $x^2+ax+1=0$, according to Vieta's Theorem, $x_1+x_2=a$. And from that we can derive 
					\begin{align*}
						(-1)^{\textup{Tr}\left(bx_1\right)}+ (-1)^{\textup{Tr}\left(bx_2\right)}
						&=
						\begin{cases}
							2 \textup{ or } -2 , & \textup{if } (-1)^{\textup{Tr}\left(ab\right)}=1,\\
							0 , &  \textup{if } (-1)^{\textup{Tr}\left(ab\right)}=-1,
						\end{cases}\\
						&=
						\begin{cases}
							2 \textup{ or } -2 , &  \textup{if } \textup{Tr}\left( ab\right) =0,\\
							0  , &   \textup{if } \textup{Tr}\left( ab\right) =1.
						\end{cases}
					\end{align*}
					
					Then we have		
					\begin{align*}
						S_2=\begin{cases}
							2^{m+1} \textup{ or } -2^{m+1}  , & \textup{if } a \in \Upsilon_1, \textup{Tr}\left( ab\right) =0,\\
							0  , & \textup{if }a \notin \Upsilon_1 \textup{ or }  a \in \Upsilon_1, \textup{Tr}\left( ab\right) =1.
						\end{cases}
					\end{align*}	
				\end{enumerate}
				The result can be obtained after sorting.
				
			\end{proof}
			
			\begin{lemma}\label{S3}
				When $m$ is odd, define	
				\begin{align*}
					S_3=\sum\limits_{x\in \mathbb{F}_{2^m}^*}\sum\limits_{y\in \mathbb{F}_{2^m}}(-1)^{\textup{Tr}\left(yx^2+x+y\right)+ \textup{Tr}(axy+bx)  },  
				\end{align*}
				then we have
				\begin{align*}
					S_3=\begin{cases}
						-2^m ,& \textup{if }  a=0, b \in \mathbb{F}_{2^m}^*, \textup{Tr}(b)=0,\\
						2^m  ,& \textup{if }  a=0, b \in \mathbb{F}_{2^m}^*, \textup{Tr}(b)=1,\\
						0    ,& \textup{if }  a \notin \Upsilon_1  \cup \{0\} ,  b \in \mathbb{F}_{2^m} \textup{ or } a \in \Upsilon_1,  b \in \mathbb{F}_{2^m}, \textup{Tr}((b+1)a)=1,\\
						-2^{m+1} \textup{ or } 2^{m+1}  ,& \textup{otherwise}.
					\end{cases}
				\end{align*}
				
			\end{lemma}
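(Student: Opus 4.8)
The plan is to mirror the structure of the proof of Lemma \ref{S2}, splitting into cases according to whether $a$ and $b$ vanish, and in each case first summing over $y\in\mathbb{F}_{2^m}$ to collapse the inner sum via orthogonality of the canonical additive character. Writing the exponent as $\textup{Tr}\bigl((x^2+ax+1)y\bigr)+\textup{Tr}\bigl((b+1)x\bigr)$, the inner sum $\sum_{y}(-1)^{\textup{Tr}((x^2+ax+1)y)}$ is $2^m$ when $x^2+ax+1=0$ and $0$ otherwise. So the whole double sum reduces to $2^m$ times a sum of $(-1)^{\textup{Tr}((b+1)x)}$ over the nonzero roots $x$ of $x^2+ax+1=0$, and by Lemma \ref{a} and Corollary \ref{a1} there are two such roots exactly when $a\in\Upsilon_1$, none otherwise. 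The extra ``$+x$'' inside the trace compared to Lemma \ref{S2} is precisely what turns the weighting factor $(-1)^{\textup{Tr}(bx)}$ into $(-1)^{\textup{Tr}((b+1)x)}$, so wherever the $S_2$ computation used $\textup{Tr}(ab)$ we will instead get $\textup{Tr}((b+1)a)$.

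First I would handle $a=0$. Then $x^2+ax+1=x^2+1=(x+1)^2$, whose only root is $x=1$, which lies in $\mathbb{F}_{2^m}^*$; hence $S_3=2^m(-1)^{\textup{Tr}((b+1)\cdot 1)}=2^m(-1)^{\textup{Tr}(b)+\textup{Tr}(1)}$. Here I invoke that $m$ is odd so $\textup{Tr}(1)=1$, giving $S_3=-2^m$ when $\textup{Tr}(b)=0$ and $S_3=2^m$ when $\textup{Tr}(b)=1$; the sub-case $b=0$ falls under $\textup{Tr}(b)=0$ and is subsumed. Next, for $a\ne 0$: if $a\notin\Upsilon_1$ the polynomial $x^2+ax+1$ has no root and $S_3=0$ regardless of $b$. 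If $a\in\Upsilon_1$, let $x_1,x_2$ be the two nonzero roots, so $x_1+x_2=a$ by Vieta, and
\begin{align*}
S_3&=2^m\Bigl((-1)^{\textup{Tr}((b+1)x_1)}+(-1)^{\textup{Tr}((b+1)x_2)}\Bigr).
\end{align*}
The two terms have the same sign iff $\textup{Tr}((b+1)x_1)=\textup{Tr}((b+1)x_2)$, i.e. iff $\textup{Tr}((b+1)(x_1+x_2))=\textup{Tr}((b+1)a)=0$; then the bracket is $\pm 2$ and $S_3=\pm 2^{m+1}$, while if $\textup{Tr}((b+1)a)=1$ the bracket is $0$ and $S_3=0$. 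Collecting the cases $a\notin\Upsilon_1\cup\{0\}$ (all $b$) and $a\in\Upsilon_1$ with $\textup{Tr}((b+1)a)=1$ gives the third line, and everything remaining—namely $a\in\Upsilon_1$ with $\textup{Tr}((b+1)a)=0$—gives the $\pm 2^{m+1}$ last line. Note that when $b=0$ and $a\in\Upsilon_1$ one checks $\textup{Tr}(a)=\textup{Tr}(r+r^{-1})$, which need not be $0$, so both sub-possibilities genuinely occur; this is consistent with the ``otherwise'' clause absorbing it.

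The computation is essentially routine given the earlier lemmas; the only genuinely delicate point is bookkeeping the case boundaries so that the piecewise description is exhaustive and non-overlapping—in particular making sure the $b=0$ sub-cases are correctly folded into the $\textup{Tr}(b)$- and $\textup{Tr}((b+1)a)$-conditions, and that the ``$\textup{Tr}(1)=1$'' step (which needs $m$ odd) is applied only where intended. I do not expect to be able to pin down the sign of $S_3$ beyond ``$-2^{m+1}$ or $2^{m+1}$'' in the last line, just as in Lemma \ref{S2}; resolving that sign would require knowing $\textup{Tr}((b+1)x_1)$ individually, which is not determined by $a$ and $b$ alone, and it is not needed for the weight-distribution arguments that follow.
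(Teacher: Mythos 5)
Your proposal is correct and follows essentially the same route as the paper: collapse the $y$-sum via character orthogonality so that only the roots of $x^2+ax+1$ contribute, handle $a=0$ via the single root $x=1$ together with $\textup{Tr}(1)=1$ for $m$ odd, and for $a\in\Upsilon_1$ use Vieta ($x_1+x_2=a$) to reduce the sign question to $\textup{Tr}((b+1)a)$, exactly as the paper does by appealing to case (iii) of Lemma \ref{S2}. The residual sign ambiguity $\pm 2^{m+1}$ is likewise left unresolved in the paper, so nothing is missing.
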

			
			\begin{proof}
				\begin{enumerate}[(i)]
					\item When $a=0$, $b \ne 0$,
					\begin{align*}
						S_3&=\sum\limits_{x\in \mathbb{F}_{2^m}^*}(-1)^{\textup{Tr}((b+1)x)}\sum\limits_{y\in \mathbb{F}_{2^m}}(-1)^{\textup{Tr}\left((x^2+1)y\right)}\\
						&=-(-1)^{\textup{Tr}(b)}2^m 
						=\begin{cases}
							2^m  ,& \textup{if }   \textup{Tr}(b)=1,\\
							-2^m  , &  \textup{if }  \textup{Tr}(b)=0.
						\end{cases}
					\end{align*}
					\item When $a \ne 0$, $b \in \mathbb{F}_{2^m}$,
					\begin{align*}
						S_3&=\sum\limits_{x\in \mathbb{F}_{2^m}^*}(-1)^{\textup{Tr}((b+1)x)}\sum\limits_{y\in \mathbb{F}_{2^m}}(-1)^{\textup{Tr}\left((x^2+ax+1)y\right)}\\
						&=\begin{cases}
							\left( (-1)^{\textup{Tr}\left((b+1)x_1\right)}+ (-1)^{\textup{Tr}\left((b+1)x_2\right)}\right) 2^m ,& \textup{if } a \in \Upsilon_1,\\
							0 ,& \textup{if } a \notin \Upsilon_1,
						\end{cases}	
					\end{align*}
					where $x_1$ and $x_2$ are two nonzero solutions of $x^2+ax+1=0$ in $\mathbb{ F}_{2^m}$. Similar to the proof of the case \eqref{same} in Lemma \ref{S2}, we have 
					\begin{align*}
						S_3=\begin{cases}
							2^{m+1} \textup{ or } -2^{m+1} ,& \textup{if } a \in \Upsilon_1, \textup{Tr}\left( a(b+1)\right) =0,\\
							0  ,& \textup{if }a \notin \Upsilon_1 \textup{ or }  a \in \Upsilon_1, \textup{Tr}\left( a(b+1)\right) =1.
						\end{cases}
					\end{align*}		
				\end{enumerate}
				This completes the proof of this lemma.
			\end{proof}

			\begin{lemma}\label{S4}
				Define	$S_4=\sum\limits_{x\in \mathbb{F}_{2^m}^*}\sum\limits_{y\in \mathbb{F}_{2^m}}(-1)^{\textup{Tr}\left(yx^2+xy\right)+ \textup{Tr}(axy+bx)  }  $
				, then	
				\begin{align*}
					S_4=\begin{cases}
						0 ,&	 \textup{if }  a=1, b \in \mathbb{F}_{2^m}  ,\\
						2^m ,&  \textup{if }  a=0,  b \in \mathbb{F}_{2^m}^* , \textup{Tr}(b)=0   ,\\
						& \textup{or }  a \in \Upsilon_2 ,  b \in \mathbb{F}_{2^m} , \textup{Tr}((a+1)b)=0,\\
						-2^m ,&\textup{if }   a=0,  b \in \mathbb{F}_{2^m}^* , \textup{Tr}(b)=1, \\
						&  \textup{or } a  \in \Upsilon_2,   b \in \mathbb{F}_{2^m}^* , \textup{Tr}((a+1)b)=1   . 
					\end{cases}
				\end{align*}
				
			\end{lemma}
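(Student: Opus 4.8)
The plan is to follow the same character-sum strategy used in the proofs of Lemmas~\ref{S2} and~\ref{S3}: swap the order of summation so that the inner sum runs over $y$, collapse it with the orthogonality relation, and then analyze the residual sum over $x$ by counting nonzero roots of a quadratic. Concretely, I would first rewrite
\begin{align*}
	S_4=\sum\limits_{x\in \mathbb{F}_{2^m}^*}(-1)^{\textup{Tr}(bx)}\sum\limits_{y\in \mathbb{F}_{2^m}}(-1)^{\textup{Tr}\left((x^2+x+ax)y\right)}
	=\sum\limits_{x\in \mathbb{F}_{2^m}^*}(-1)^{\textup{Tr}(bx)}\sum\limits_{y\in \mathbb{F}_{2^m}}(-1)^{\textup{Tr}\left((x^2+(a+1)x)y\right)}.
\end{align*}
By orthogonality the inner sum equals $2^m$ when $x^2+(a+1)x=0$ and $0$ otherwise, so $S_4=2^m\sum_{x}(-1)^{\textup{Tr}(bx)}$ where $x$ ranges over the \emph{nonzero} roots of $g(x)=x^2+(a+1)x$. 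This is exactly the polynomial studied in Corollary~\ref{a2}, which is the reason that result was isolated.

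Next I would split into cases according to Corollary~\ref{a2}. If $a=1$, then $g(x)=x^2$ has no nonzero root, so the residual sum is empty and $S_4=0$; this handles the first line. If $a=0$, then $g(x)=x^2+x=x(x+1)$ has the single nonzero root $x=1$, so $S_4=2^m(-1)^{\textup{Tr}(b)}$, which gives $2^m$ when $\textup{Tr}(b)=0$ and $-2^m$ when $\textup{Tr}(b)=1$; combined with Lemma~\ref{S1}-type bookkeeping this also forces $b\neq 0$ to be recorded (when $b=0$ the value $2^m$ is still correct, so the stated condition $b\in\mathbb{F}_{2^m}^*$ is a matter of how the author partitions the $(a,b)$-space, and I would just note that $a=b=0$ is excluded by the standing convention). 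If $a\in\Upsilon_2=\mathbb{F}_{2^m}^*\setminus\{1\}$, then $g(x)=0$ has exactly one nonzero root, namely $x_0=a+1$, so $S_4=2^m(-1)^{\textup{Tr}(b(a+1))}=2^m(-1)^{\textup{Tr}((a+1)b)}$, which is $2^m$ when $\textup{Tr}((a+1)b)=0$ and $-2^m$ when $\textup{Tr}((a+1)b)=1$. Collecting the cases and noting $\Upsilon_2\cup\{0\}\cup\{1\}=\mathbb{F}_{2^m}$ reproduces the claimed table.

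I do not expect a genuine obstacle here; unlike Lemmas~\ref{S2} and~\ref{S3}, the quadratic $x^2+(a+1)x$ is \emph{linear up to squaring} and always has $0$ as a root, so there is at most one nonzero root and the character sum over the root set never produces the $\pm 2^{m+1}$ cancellation/reinforcement ambiguity that appeared before — this is why $S_4$ takes only the values $0,\pm 2^m$. The only minor care needed is the bookkeeping of the degenerate sub-cases (ensuring $b=0$ is slotted correctly and that the partition of $(a,b)$ into the listed conditions is exhaustive and disjoint), together with a one-line justification that $\textup{Tr}((a+1)b)$ is the right exponent by substituting $x_0=a+1$. That is entirely routine.
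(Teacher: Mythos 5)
Your proposal is correct and follows essentially the same route as the paper: factor out $(-1)^{\textup{Tr}(bx)}$, collapse the $y$-sum by orthogonality onto the nonzero roots of $x^{2}+(a+1)x$ (Corollary~\ref{a2}), and read off $S_4\in\{0,\pm 2^m\}$ from $(-1)^{\textup{Tr}((a+1)b)}$; the paper merely organizes the same computation into the three cases $a=0,b\neq 0$; $a\neq 0,b=0$; $a\neq 0,b\neq 0$, while your treatment of the $b=0$ bookkeeping is also consistent with the stated table.
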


			\begin{proof}
				\begin{enumerate}[(i)]
					\item  When $a  = 0 $, $b \ne 0$,
					\begin{align*}
						S_4&=\sum\limits_{x\in \mathbb{F}_{2^m}^*}(-1)^{\textup{Tr}(bx)}\sum\limits_{y\in \mathbb{F}_{2^m}}(-1)^{\textup{Tr}\left((x^2+x)y\right)  }\\
						&=(-1)^ {\textup{Tr}(b)} 2^m\\
						&=\begin{cases}
							2^m ,&  \textup{if } \textup{Tr}(b)=0,\\
							-2^m ,&  \textup{if } \textup{Tr}(b)=1.
						\end{cases}	
					\end{align*}			
					\item  When $a  \ne 0 $, $b = 0$,
					\begin{align*}	
						S_4&=\sum\limits_{x\in \mathbb{F}_{2^m}^*}\sum\limits_{y\in \mathbb{F}_{2^m}}(-1)^{\textup{Tr}\left((x^2+(a+1)x)y\right)  }\\
						&=\begin{cases}
							0 ,& \textup{if } a=1,\\
							2^m ,& \textup{otherwise}.
						\end{cases}		
					\end{align*}
					
					\item When $a \ne 0$, $b \ne 0$,			
					\begin{align*}
						S_4&=\sum\limits_{x\in \mathbb{F}_{2^m}^*}(-1)^{\textup{Tr}(bx)} \sum\limits_{y\in \mathbb{F}_{2^m}}(-1)^{\textup{Tr}\left((x^2+(a+1)x)y\right)}\\
						&=\begin{cases}
							0 ,&\textup{if } a=1,\\
							(-1)^ {\textup{Tr}((a+1)b)} 2^m, & \textup{if } a\ne 1,
						\end{cases}\\
						&=\begin{cases}
							0 ,&\textup{if } a=1,\\
							2^m ,& \textup{if } a\ne 1,\textup{Tr}((a+1)b)=0,	\\
							-2^m ,& \textup{if } a\ne 1,\textup{Tr}((a+1)b)=1.
						\end{cases}
					\end{align*}
				\end{enumerate}
				The proof is completed.	
			\end{proof}

			\section{ Main results and proofs }
			
			This section gives the results of our constructions including parameters, weight distributions, proofs and some concrete examples.
			\subsection{Main results}
			
			\begin{theorem}\label{thm1}
				Suppose that $C_{D_1}$ and $D_1$ are defined by \eqref{code} and \eqref{det1}, respectively, then $C_{D_1}$ is a $[2^{2m-1},2m,2^{m-1}(2^{m-1}-1)]$ projective three-weight linear code with weight distribution given in Table \ref{table1}.
			\end{theorem}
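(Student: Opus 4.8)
The plan is to rewrite each codeword weight as a two-variable exponential sum built from the indicator of $D_1$, read the attainable weights off Lemmas \ref{S1} and \ref{S2}, establish projectivity by a column-distinctness argument, and then recover the multiplicities from the Pless power moments.

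\emph{Length and weights.} Using $\frac12\sum_{z\in\mathbb{F}_2}(-1)^{z\,\textup{Tr}(y(x^2+1))}$ as the indicator of $\textup{Tr}(yx^2+y)=0$ and the orthogonality of $\chi$, the inner sum over $y$ vanishes unless $x=1$, which gives $\#D_1=2^{2m-1}$. For $(a,b)\neq(0,0)$ the same indicator trick yields
\[
\textup{wt}(\mathbf{c}(a,b))=\frac{\#D_1}{2}-\frac12\sum_{(x,y)\in D_1}(-1)^{\textup{Tr}(axy+bx)}=2^{2m-2}-\frac14\bk{S_1+S_2},
\]
where $S_1$ and $S_2$ are precisely the sums of Lemmas \ref{S1} and \ref{S2} (the term $S_1$ produced by the constant $1$ of the indicator of $D_1$, the term $S_2$ by its character part). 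Inserting the case distinctions of those two lemmas, one checks case by case that $S_1+S_2\in\{-2^{m+1},0,2^{m+1}\}$ for all $(a,b)\neq(0,0)$, so the only possible nonzero weights are $2^{2m-2}-2^{m-1}=2^{m-1}(2^{m-1}-1)$, $2^{2m-2}$ and $2^{2m-2}+2^{m-1}$, and each one is attained (for instance $a\in\Upsilon_1,\,b=0$ realizes the smallest weight, while $a=0$ with $\textup{Tr}(b)=1$ resp.\ $\textup{Tr}(b)=0$ realizes the largest resp.\ middle one). Since every nonzero codeword then has positive weight and $(a,b)\mapsto\mathbf{c}(a,b)$ is $\mathbb{F}_2$-linear, this map is injective, whence $\dim C_{D_1}=2m$ and the minimum distance is $2^{m-1}(2^{m-1}-1)$.

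\emph{Projectivity.} I would identify the $(x,y)$-entry of $\mathbf{c}(a,b)$ with the value of the nondegenerate bilinear form $\langle(a,b),(u,v)\rangle=\textup{Tr}(au+bv)$ at the point $(xy,x)\in\mathbb{F}_{2^m}^2$. On $D_1$ one has $x\neq0$, so $(xy,x)\neq(0,0)$; and $(x_1y_1,x_1)=(x_2y_2,x_2)$ forces $x_1=x_2$ and then $y_1=y_2$. Hence a generator matrix of $C_{D_1}$ has no zero column and no two equal columns, i.e.\ $d^{\perp}\geq3$, so $C_{D_1}$ is projective.

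\emph{Weight distribution.} Finally, substituting $n=2^{2m-1}$, $k=2m$, $q=2$ and $A_1^{\perp}=A_2^{\perp}=0$ (the latter from projectivity) into the first three Pless power moments of Lemma \ref{Pless}, together with $A_0=1$, gives three linear equations in the three unknown weight frequencies; the coefficient matrix is Vandermonde in the three distinct weights, so the system is uniquely solvable and its solution is the content of Table \ref{table1}. I expect the main obstacle to be merely organizational, namely correctly merging the many sub-cases of Lemmas \ref{S1}--\ref{S2} into the three weight classes; note that the Pless-moment route deliberately avoids the one genuinely delicate count, that of deciding for how many $b$ one has $S_2=+2^{m+1}$ versus $-2^{m+1}$ when $a\in\Upsilon_1$ and $\textup{Tr}(ab)=0$, so a direct-counting proof would be harder there but is not needed.
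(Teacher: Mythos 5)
Your proposal is correct, and its core — writing $\textup{wt}(\mathbf{c}(a,b))=2^{2m-2}-\tfrac14(S_1+S_2)$ and reading the three possible weights off Lemmas \ref{S1} and \ref{S2} — is exactly the paper's computation. Where you genuinely diverge is in the last two steps. For projectivity, the paper only asserts that $d^{\perp}\geq 3$ ``can be calculated from Lemma \ref{Pless}'', whereas you prove it directly: the column of the generator matrix indexed by $(x,y)\in D_1$ is determined by the point $(xy,x)\in\mathbb{F}_{2^m}^2\setminus\{(0,0)\}$ via the nondegenerate trace pairing, and $(x,y)\mapsto(xy,x)$ is injective on $D_1$, so there is no zero column and no repeated column; this is cleaner and also legitimately supplies $A_1^{\perp}=A_2^{\perp}=0$ as an input rather than an output. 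For the multiplicities, the paper mixes direct counting (Corollary \ref{a1} for $|\Upsilon_1|$ and Lemma \ref{T} for the split of $b$ by $\textup{Tr}(ab)$) with the power moments, while you solve the $3\times 3$ Vandermonde system given by the first three Pless moments alone; both work, and as you note your route deliberately avoids the $\pm 2^{m+1}$ ambiguity in $S_2$ when $a\in\Upsilon_1$, $\textup{Tr}(ab)=0$, at the price of not exhibiting which $(a,b)$ carry which weight (information the paper's counting route partially provides and which is reused implicitly in its frequency formulas). One cosmetic caveat: your existence claims for the three weights, the positivity of the minimum weight, and hence the dimension argument require $m\geq 2$ (so that $\Upsilon_1\neq\emptyset$ and $2^{m-1}(2^{m-1}-1)>0$), an assumption the paper also leaves implicit.
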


			\begin{table}[H]
				\caption[short text]{The weight distribution of $C_{D_1}$ }
				\centering
				\label{table1}
				\begin{tabular}{cc}
					\hline
					weight & multiplicity \\ \hline
					0 & 1 \\	
					$ 2^{m-1}(2^{m-1}-1)$ & $2^{m-2}(2^{m-1}-1)$\\
					$ 2^{2m-2}$ & $3 \cdot 2^{2m-2}-1$ \\
					$2^{m-1}(2^{m-1}+1)$ & $2^{m-2}(2^{m-1}+1)$\\\hline
				\end{tabular}
			\end{table}
			
			\begin{theorem} \label{Thm2}
				When $m$ is odd, suppose that $C_{D_2}$ and $D_2$ are  defined by \eqref{code} and \eqref{det2}, respectively, then $C_{D_2}$ is a $[2^{m}(2^{m-1}-1),2m,2^{m}(2^{m-2}-1)]$ projective three-weight linear code with weight distribution given in Table \ref{table2}.	
				
			\end{theorem}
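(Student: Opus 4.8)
The plan is to mirror the strategy that (implicitly) proves Theorem~\ref{thm1}, now using the character sum $S_3$ of Lemma~\ref{S3} in place of $S_2$. First I would record the length. By a short orthogonality computation, $\#D_2 = \tfrac12\sum_{x\in\mathbb{F}_{2^m}^*}\sum_{y\in\mathbb{F}_{2^m}}(1+(-1)^{\textup{Tr}(yx^2+x+y)})$; since $m$ is odd, $\textup{Tr}(1)=1$, and splitting the inner sum over $y$ according to whether $x^2+1=0$ (i.e. $x=1$) shows the contribution of the nontrivial character vanishes except at $x=1$, where it contributes $(-1)^{\textup{Tr}(1)}2^m=-2^m$. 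This yields $n=\#D_2 = \tfrac12\big((2^m-1)2^m - 2^m\big) = 2^m(2^{m-1}-1)$, matching the claimed length.

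Next I would compute the Hamming weight of a generic codeword $\mathbf{c}(a,b)$. As in the standard defining-set argument, $\textup{wt}(\mathbf{c}(a,b)) = n - Z(a,b)$ where $Z(a,b)=\#\{(x,y)\in D_2 : \textup{Tr}(axy+bx)=0\}$, and one expands
\begin{align*}
Z(a,b) &= \frac14\sum_{(x,y)\in\mathbb{F}_{2^m}^*\times\mathbb{F}_{2^m}}\big(1+(-1)^{\textup{Tr}(yx^2+x+y)}\big)\big(1+(-1)^{\textup{Tr}(axy+bx)}\big)\\
&= \frac14\Big((2^m-1)2^m - 2^m + S_1(a,b) + S_3(a,b)\Big),
\end{align*}
using Lemma~\ref{S1} for the cross term $\sum(-1)^{\textup{Tr}(axy+bx)}$ and Lemma~\ref{S3} for $S_3$. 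Feeding in the case list of Lemma~\ref{S3} (and $S_1$, which is $-2^m$ only when $a=0,b\neq0$ and $0$ otherwise), the nonzero $S_3$-values $\pm2^m$, $\pm2^{m+1}$ collapse the possible weights to exactly three: the value $2^m(2^{m-2}-1)$ from $S_3=2^{m+1}$, the value $2^{m-1}(2^{m-1}-1)=2^{2m-2}-2^{m-1}$ from $S_3\in\{0,-2^m\}$-type contributions, and $2^m(2^{m-2}-1)+2^{m-1}\cdot\text{(something)}$ from $S_3=-2^{m+1}$; I would simplify each arithmetic expression carefully to see that only three distinct nonzero weights survive and that the smallest is $2^m(2^{m-2}-1)$, giving $d$.

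For the frequencies $A_{w_1},A_{w_2},A_{w_3}$ I would use the first three Pless power moments (Lemma~\ref{Pless}). This requires $A_1^\perp$ and $A_2^\perp$ for $C_{D_2}^\perp$: projectivity ($d^\perp\ge3$, i.e. $A_1^\perp=A_2^\perp=0$) must be checked by showing the $n$ points $(x,y)\in D_2$, viewed projectively, are pairwise distinct and no three are collinear over $\mathbb{F}_2$ — equivalently that the columns of the generator matrix are nonzero and pairwise non-proportional, which over $\mathbb{F}_2$ just means distinct and nonzero; this is a direct check using $(x,y)\mapsto(x,xy)$ being injective on $\mathbb{F}_{2^m}^*\times\mathbb{F}_{2^m}$. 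With $A_1^\perp=A_2^\perp=0$, the three moment equations $\sum A_{w_i}=2^{2m}-1$, $\sum w_iA_{w_i}=2^{2m-1}\cdot 2^m(2^{m-1}-1)\cdot(\text{coeff})$, $\sum w_i^2A_{w_i}=\dots$ form a linear system in the three unknowns; solving it should reproduce Table~\ref{table2}. The main obstacle I anticipate is not any single step but the bookkeeping in the $S_3=\pm2^{m+1}$ branch: Lemma~\ref{S3} lumps $-2^{m+1}$ and $2^{m+1}$ together under "otherwise," so I must separately count how many $(a,b)$ give each sign — this parallels the $\pm2$ dichotomy for $(-1)^{\textup{Tr}(bx_1)}+(-1)^{\textup{Tr}(bx_2)}$ in the proof of Lemma~\ref{S2}, and resolving it cleanly (likely via the Pless moments themselves rather than a direct count, since the moments are insensitive to which sign occurs once the weight values are pinned down) is where the real care is needed.
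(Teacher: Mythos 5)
Your proposal is correct and takes essentially the same route as the paper: length by character orthogonality, weights from $\tfrac{n}{2}-\tfrac14\bigl(S_1+S_3\bigr)$ via Lemmas \ref{S1} and \ref{S3}, and multiplicities from the Pless power moments, with your direct verification of $A_1^\perp=A_2^\perp=0$ (columns determined by the injective map $(x,y)\mapsto(xy,x)$, hence nonzero and distinct) a valid, arguably cleaner, substitute for the paper's a posteriori projectivity remark. Just correct the small bookkeeping slip in your case list: for $a=0$, $b\neq0$, $\textup{Tr}(b)=0$ one has $S_1=S_3=-2^m$, so this case yields weight $2^{2m-2}$, while the middle weight $2^{m-1}(2^{m-1}-1)$ occurs exactly when $S_1+S_3=0$.
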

			\begin{table}[H]
				\caption[short text]{The weight distribution of $C_{D_2}$ }
				\centering
				\label{table2}
				\begin{tabular}{cc}
					\hline
					weight & multiplicity \\ \hline
					0 & 1 \\
					$2^{m}(2^{m-2}-1)$ & $ 2^{m-2}(2^{m-1}-1) $\\	
					$ 2^{m-1}(2^{m-1}-1)$ & $3 \cdot 2^{2m-2}$\\
					$ 2^{2m-2}$ & $2^{m-2}(2^{m-1}+1)-1$ \\\hline
				\end{tabular}
			\end{table}

			\begin{theorem}\label{thm3}
				Suppose that $C_{D_3}$ and $D_3$ are defined by \eqref{code} and  \eqref{det3}, respectively, then $C_{D_3}$ is a $[2^{2m-1},2m,2^{2m-2}(2^m-1)]$ projective four-weight linear code with weight distribution given in Table \ref{table3}.
			\end{theorem}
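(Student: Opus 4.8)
The plan is to follow the same character-sum strategy already set up in the auxiliary lemmas. First I would compute the length $n = \#D_3$: by orthogonality of the canonical additive character,
\[
\#D_3 = \sum_{x\in\mathbb{F}_{2^m}^*}\sum_{y\in\mathbb{F}_{2^m}}\frac{1}{2}\sum_{z\in\mathbb{F}_2}(-1)^{z\,\mathrm{Tr}(yx^2+xy)} = \frac{1}{2}\Big((2^m-1)2^m + S\Big),
\]
where $S=\sum_{x\in\mathbb{F}_{2^m}^*}\sum_{y}(-1)^{\mathrm{Tr}((x^2+x)y)}$; since $x^2+x=0$ only at $x=0,1$ and $x=1\in\mathbb{F}_{2^m}^*$, the inner sum is $2^m$ exactly when $x=1$ and $0$ otherwise, so $S=2^m$ and $n=2^{2m-1}$, matching the stated length. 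Next, for $(a,b)\neq(0,0)$ the Hamming weight of the codeword $\mathbf{c}(a,b)$ is
\[
\mathrm{wt}(\mathbf{c}(a,b)) = \#D_3 - \#\{(x,y)\in D_3 : \mathrm{Tr}(axy+bx)=0\},
\]
and expanding the indicator of $D_3$ and of $\mathrm{Tr}(axy+bx)=0$ each as $\tfrac12\sum_{z\in\mathbb{F}_2}(-1)^{z(\cdots)}$ gives
\[
\mathrm{wt}(\mathbf{c}(a,b)) = 2^{2m-2} - \tfrac14\big(S_1 + S_4\big),
\]
where $S_1$ is the sum from Lemma \ref{S1} and $S_4$ is exactly the sum evaluated in Lemma \ref{S4}. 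So the weights are read off directly from the case tables of Lemmas \ref{S1} and \ref{S4}, using Corollary \ref{a2} for the set $\Upsilon_2$.

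The second step is the bookkeeping: I would partition the pairs $(a,b)$ according to the cases of Lemma \ref{S4} and count each class, using Lemma \ref{T} (with $\mathbb{E}$ a coset-type subset of $\mathbb{F}_{2^m}^*$) to count how many $b$ satisfy $\mathrm{Tr}((a+1)b)=0$ versus $1$ as $a$ ranges over $\Upsilon_2$. Roughly: the case $a=1$ (all $b$) together with $(a,b)=(0,0)$-excluded contributes the weight coming from $S_1+S_4 = 0$ or $-2^m$; the cases with $S_4=2^m$ contribute one weight, those with $S_4=-2^m$ another, and $a=0$ with $b\neq 0$ splits by $\mathrm{Tr}(b)$. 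Because $S_4$ takes only the three values $0,\pm2^m$ and $S_1$ is nonzero only on $a=0,b\neq0$ (where it equals $-2^m$), combining them yields at most four distinct nonzero weights, consistent with the "four-weight" claim; I would then verify the explicit weight values and multiplicities stated in Table \ref{table3}.

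To pin down the multiplicities rigorously rather than just by direct counting, I would cross-check with the first two (and if needed the third) Pless power moments from Lemma \ref{Pless}, which requires knowing $A_1^\perp$ and $A_2^\perp$; this is where the projectivity claim $d^\perp\geq 3$ enters, so I would separately prove $A_1^\perp = A_2^\perp = 0$ by showing no column of the generator matrix is zero (equivalently $\mathrm{Tr}(axy+bx)$ is not identically zero on a single point of $D_3$ — trivial) and no two columns are equal or negatives of each other, i.e. the points $(x,y)\in D_3$ give pairwise distinct functionals $(a,b)\mapsto \mathrm{Tr}(axy+bx)$ up to scalar; over $\mathbb{F}_2$ "up to scalar" is trivial, so I just need the $n$ points to yield distinct linear forms, which follows since $(x,y)\mapsto(xy,x)$ is injective on $\{x\neq 0\}$. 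The dimension is $2m$ because the map $(a,b)\mapsto\mathbf{c}(a,b)$ is injective: if $\mathrm{Tr}(axy+bx)=0$ for all $(x,y)\in D_3$, then in particular varying $y$ freely over a one-dimensional affine set forces $a=0$, then $b=0$. I expect the main obstacle to be purely organizational — carefully enumerating the $(a,b)$-classes coming from the "$\mathrm{or}$" branches of Lemma \ref{S4} and matching the sign ambiguity "$2^m$ or $-2^m$" is actually resolved (it is not a real ambiguity once $a,b$ are fixed, and the count of each sign is what Lemma \ref{T} supplies) — rather than any deep difficulty; the Pless moments then serve as an arithmetic consistency check on the final table.
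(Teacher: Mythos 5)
Your proposal is correct and follows essentially the same route as the paper: compute $|D_3|=2^{2m-1}$ by character orthogonality, write $wt(\mathbf{c}(a,b))=2^{2m-2}-\tfrac{1}{4}(S_1+S_4)$, read the four nonzero weights off Lemmas \ref{S1} and \ref{S4} with Corollary \ref{a2}, and get the multiplicities by counting $(a,b)$ via Lemma \ref{T} (the paper does exactly this, using the Pless moments only to confirm $d^\perp\ge 3$, which you instead prove directly from the distinctness of the columns $(x y,x)$ --- a harmless, slightly more explicit variant). Note only that your appeal to a ``$2^m$ or $-2^m$'' sign ambiguity is vacuous since Lemma \ref{S4} is unambiguous, and that the minimum distance you obtain, $2^{m-2}(2^m-1)$, matches Table \ref{table3} (the exponent in the theorem statement is a typo).
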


			\begin{table}[H]
				\caption[short text]{The weight distribution of $C_{D_3}$ }
				\centering
				\label{table3}
				\begin{tabular}{cc}
					\hline
					weight & multiplicity \\ \hline
					0 & 1 \\	
					$ 2^{m-2}(2^{m}-1)$ & $2^{m}(2^{m-1}-1)$\\
					$ 2^{2m-2}$ & $2^m+2^{m-1}-1$ \\
					$2^{m-2}(2^{m}+1)$ &  $2^m(2^{m-1}-1)$ \\
					$2^{m-1}(2^{m-1}+1)$ & $2^{m-1}$\\\hline
				\end{tabular}
			\end{table}
			
			\subsection{The proofs of main results }
			\begin{enumerate}[\textbf{The proof for Theorem \ref{thm1}}]
				\item 
				
			\end{enumerate}
			
			According to the definition, the length of $C_{D_1}$ equals to 
			\begin{align*}
				n_1&=|D_1|\\
				&=\dfrac{1}{2}\sum\limits_{z \in \mathbb{F}_2}\sum\limits_{x \in \mathbb{F}_{2^m}^*}\sum\limits_{y \in \mathbb{F}_{2^m}}(-1)^{z\textup{Tr}\left((x^2+1)y\right)}\\
				&=(2^m-1)2^{m-1}+\dfrac{1}{2}\sum\limits_{x \in \mathbb{F}_{2^m}^*}\sum\limits_{y \in \mathbb{F}_{2^m}}(-1)^{\textup{Tr}\left((x^2+1)y\right)}\\
				&= (2^m-1)2^{m-1}+2^{m-1}\\
				&=2^{2m-1}.
			\end{align*}
			
			Define 
			\begin{align*}
				N_1=\big|\left\{(x,y)\in \mathbb{F}_{2^m}^* \times \mathbb{F}_{2^m}:\textup{Tr}\left(yx^2+y\right)=0 \textup{ and } \textup{Tr}(axy+bx)=0 \right\}\big|,
			\end{align*}
			then the weights of codeword $\textbf{c}$ from linear code $C_{D_1}$, 
			\begin{align*}
				wt_1(\textbf{c})&=n_1-N_1\\
				&=n_1-\dfrac{1}{2^2}\sum\limits_{z_1\in \mathbb{F}_2}\sum\limits_{z_2\in \mathbb{F}_2}\sum\limits_{x\in \mathbb{F}_{2^m}^*}\sum\limits_{y\in \mathbb{F}_{2^m}}(-1)^{z_1\textup{Tr}\left(yx^2+y\right)+ z_2\textup{Tr}(axy+bx)  }\\
				&=
				n_1-\dfrac{1}{2}n_1-
				\dfrac{1}{2^2}\sum\limits_{x\in \mathbb{F}_{2^m}^*}\sum\limits_{y\in \mathbb{F}_{2^m}}(-1)^{ \textup{Tr}(axy+bx)}
				-\dfrac{1}{2^2}\sum\limits_{x\in \mathbb{F}_{2^m}^*}\sum\limits_{y\in \mathbb{F}_{2^m}}(-1)^{\textup{Tr}\left(yx^2+y\right)+ \textup{Tr}(axy+bx)  } \\
				&=2^{2m-2}-\dfrac{1}{2^2}S_1-\dfrac{1}{2^2}S_2.
			\end{align*}
			Plugging in Lemma \ref{S1} and Lemma \ref{S2}, we get 
			\begin{align*}
				wt_1(\textbf{c})=
				\begin{cases}
					2^{2m-2} ,& \textup{if } a=0, b \in \mathbb{F}_{2^m}^* \textup{Tr}(b)=0\\
					&	\textup{or }   a \notin \Upsilon_1 \cup \{0\}, b \in \mathbb{F}_{2^m}\\
					&\textup{or }  a \in \Upsilon_1, b \in \mathbb{F}_{2^m}^*, \textup{Tr}\left( ab\right) =1,\\
					2^{m-1}(2^{m-1}-1) \textup{ or } 2^{m-1}(2^{m-1}+1) ,& \textup{otherwise}.
				\end{cases}
			\end{align*}
			Given the weight $w_1=2^{m-1}(2^{m-1}-1)  $, $w_2= 2^{2m-2}$ and $w_3= 2^{m-1}(2^{m-1}+1) $, their corresponding frequencies are $A_{w_1}$, $A_{w_2}$ and $A_{w_3}$, respectively. According to Lemma \ref{Pless}, Corollary \ref{a1} and Lemma \ref{T}, their multiplicities are
			\begin{align*}
				& A_{w_1}=  2^{m-2}(2^{m-1}-1),  \\
				& A_{w_2}= 3 \cdot 2^{2m-2}-1 , \\
				& A_{w_3}=  2^{m-2}(2^{m-1}+1).  
			\end{align*}
			\begin{enumerate}[\textbf{The proof for Theorem \ref{Thm2}}]
				\item 
				
			\end{enumerate}
			
			By the same idea, the length of  $C_{D_2}$ equals to
			\begin{align*}
				n_2 &=\big|D_2\big|\\
				&=\dfrac{1}{2}\sum\limits_{z \in \mathbb{F}_2}\sum\limits_{x \in \mathbb{F}_{2^m}^*}\sum\limits_{y \in \mathbb{F}_{2^m}}(-1)^{z\textup{Tr}\left(yx^2+x+y\right)}\\
				&=(2^m-1)2^{m-1}+\dfrac{1}{2}\sum\limits_{x \in \mathbb{F}_{2^m}^*} (-1)^{\textup{Tr}(x)}\sum\limits_{y \in \mathbb{F}_{2^m}}(-1)^{\textup{Tr}\left((x^2+x)y\right)}\\
				&= (2^m-1)2^{m-1}+(-1)^{\textup{Tr}(1)}2^{m-1}\\
				&=2^m(2^{m-1}-1).
			\end{align*}
			
			Define 
			\begin{align*}
				N_2=\big|\left\{(x,y)\in \mathbb{F}_{2^m}^* \times \mathbb{F}_{2^m} :\textup{Tr}\left(yx^2+x+y\right)=0 \textup{ and } \textup{Tr}(axy+bx)=0 \right\} \big|
			\end{align*}
			then the weights of codeword $\textbf{c}$ from linear code $C_{D_2}$, 
			\begin{align*}
				wt_2(\textbf{c})&=n_2-N_2\\
				&=n_2-\dfrac{1}{2^2}\sum\limits_{z_1\in \mathbb{F}_2}\sum\limits_{z_2\in \mathbb{F}_2}\sum\limits_{x\in \mathbb{F}_{2^m}^*}\sum\limits_{y\in \mathbb{F}_{2^m}}(-1)^{z_1\textup{Tr}\left(yx^2+x+y\right)+ z_2\textup{Tr}(axy+bx)  }\\
				&=2^{m-1}(2^{m-1}-1)-\dfrac{1}{2^2}S_1-\dfrac{1}{2^2}S_3.
			\end{align*}
				By Lemma \ref{S1} and Lemma \ref{S3}, we have
				\begin{align*}
					wt_2(\textbf{c})=
					\begin{cases}
						2^{m-1}(2^{m-1}-1), & \textup{if }  a=0, b \in \mathbb{F}_{2^m}^*, \textup{Tr}(b)=1\\
						&  \textup{or } a \notin \Upsilon_1 \cup \{0\},  b \in \mathbb{F}_{2^m}\\
						&  \textup{or } a \in \Upsilon_1,  b \in \mathbb{F}_{2^m},\textup{Tr}((b+1)a)=1,\\
						2^{m}(2^{m-2}-1) \textup{ or } 2^{2m-2}, & \textup{otherwise}.
					\end{cases}
				\end{align*} 
				Given the weights $w_4= 2^{m}(2^{m-2}-1)$, $w_5= 2^{m-1}(2^{m-1}-1)$ and $w_6= 2^{2m-2}$, their corresponding frequencies are $A_{w_4}$, $A_{w_5}$ and $A_{w_6}$, respectively. 
				According to Lemma \ref{Pless} and Corollary \ref{a1} their multiplicities are
				\begin{align*}
					& A_{w_4}= 2^{m-2}(2^{m-1}-1),  \\
					& A_{w_5}= 3 \cdot 2^{2m-2} , \\
					& A_{w_6}= 2^{m-2}(2^{m-1}+1)-1 . 
				\end{align*}

				\begin{enumerate}[\textbf{The proof for Theorem \ref{thm3}}]
					\item 
				\end{enumerate}
				
				According to the defining sets \eqref{det1} and \eqref{det3}, $C_{D_1}$ and $C_{D_3}$ have the same length, that is, $n_3=n_1=2^{2m-1}$.		
				
				Define 
				\begin{align*}
					N_3=\big|\left\{(x,y)\in \mathbb{F}_{2^m}^* \times \mathbb{F}_{2^m} :\textup{Tr}\left(yx^2+xy\right)=0 \textup{ and } \textup{Tr}(axy+bx)=0 \right\} \big|,
				\end{align*}
				then the weights of codeword $\textbf{c}$ from linear code $C_{D_3}$ are
				\begin{align*}
					wt_3(\textbf{c})&=n_3-N_3\\
					&=n_3-\dfrac{1}{2^2}\sum\limits_{z_1\in \mathbb{F}_2}\sum\limits_{z_2\in \mathbb{F}_2}\sum\limits_{x\in \mathbb{F}_{2^m}^*}\sum\limits_{y\in \mathbb{F}_{2^m}}(-1)^{z_1\textup{Tr}\left(yx^2+xy\right)+ z_2\textup{Tr}(axy+bx)  }\\
					&=\dfrac{1}{2}n_3-\dfrac{1}{2^2}S_1-\dfrac{1}{2^2}S_4.
				\end{align*}
				Plugging in Lemma \ref{S1} and Lemma \ref{S4}, we get the weights that
				\begin{align*}
					wt_3(\textbf{c})=
					\begin{cases}
						2^{m-2}(2^m-1) ,      & \textup{if }  a \in \Upsilon_2,  b \in \mathbb{F}_{2^m} , \textup{Tr}((a+1)b)=0  ,\\
						2^{2m-2}         ,    & \textup{if }  a=0,  b \in \mathbb{F}_{2^m}^* , \textup{Tr}(b)=0 \\
						& \textup{or }  a=1, b \in \mathbb{F}_{2^m}  ,\\ 
						2^{m-2}(2^m+1)   ,    & \textup{if } a \in \Upsilon_2,    b \in \mathbb{F}_{2^m}^* , \textup{Tr}((a+1)b)=1   ,\\	
						2^{m-1}(2^{m-1}+1),   & \textup{if } a=0,  b \in \mathbb{F}_{2^m}^* , \textup{Tr}(b)=1  .
					\end{cases}
				\end{align*}
				The desired conclusion then follows from Lemma \ref{T}.

				It can be calculated from the Lemma \ref{Pless} that the minimum weights of the dual codes of codes we investigated are greater than or equal to $3$, that is, the three classes of codes we have constructed are projective codes. Table \ref{table4} gives some examples, according to the Griesmer bound from \cite{huffman2010}, the code $[8,4,3]$ is almost optimal.
				
				\begin{table}[H]
					\caption[short text]{Some examples }
					\centering
					\label{table4}
					\begin{tabular}{llll}
						\hline
						& m &  parameters & weight enumerater\\ \hline
						Theorem \ref{thm1} & 2 & [8,4,2] & $1+x^2+11x^4+3x^6$ \\	\hline	
						Theorem \ref{thm1} & 3 & [32,6,12] & $1+6x^{12}+47x^{16}+10x^{20}$\\	\hline
						Theorem \ref{Thm2} & 3 & [24,6,8] & $1+6x^8+48x^{12}+9x^{16}$\\	\hline	
						Theorem \ref{Thm2} & 5 & [480,10,224] & $1+120x^{224}+768x^{240}+135x^{256}$\\ \hline
						Theorem \ref{thm3} & 2 & [8,4,3] & $1+4x^3+5x^4+4x^5+2x^6$\\	\hline	
						Theorem \ref{thm3} & 3 & [32,6,14] & $1+24x^{14}+11x^{16}+24x^{18}+4x^{20}$\\	\hline
						
					\end{tabular}
				\end{table} 
				\section{Applications}
				\subsection{Applications to secret sharing schemes}
				
				In 1979, Shamir \cite{shamir1979} and Blakley \cite{gr1979} first introduced the concept of secret sharing schemes. 
				
				Secret sharing schemes can be constructed using any linear codes over $\mathbb{F}_q$ by defining an access structure. However, determining the access structure based on any linear code is often complicated and can only be done in special cases. 
				The dual codes of minimal codes are often used to construct the access structure of secret sharing schemes, so it is meaningful to search for minimal codes.
				
				For a linear code $C$, if any nonzero
				codeword $\textbf{c} \in C$ covers only its multiples, but no other nonzero codewords in $ C$, we call $C$ a minimal linear code. The following lemma provides a sufficient condition for a linear code to be minimal.
				\begin{lemma}\label{abbound}\textup{(\cite[Ashikhmin-Barg Bound]{ab1998})}
					Every nonzero codeword of a linear  code $C$ over $\mathbb{F}_p$ is minimal, provided that
					\begin{equation}
						\dfrac{w_{min}}{w_{max}}>\dfrac{p-1}{p},\notag
					\end{equation}
					where $w_{min}$ and $w_{max}$ denote the minimum and maximum nonzero weights in $C$, respectively.
				\end{lemma}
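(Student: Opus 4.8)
The plan is to prove the contrapositive: if some nonzero codeword of $C$ fails to be minimal, then $w_{min}/w_{max}\le (p-1)/p$. So suppose $\textbf{c}\in C$ is nonzero and that there is a nonzero codeword $\textbf{c}'\in C$ whose support $\textup{Supp}(\textbf{c}')$ is contained in $\textup{Supp}(\textbf{c})$ but which is not a scalar multiple of $\textbf{c}$. Write $w$ for the Hamming weight of $\textbf{c}$, and note $w_{min}\le w\le w_{max}$.

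First I would partition $\textup{Supp}(\textbf{c})$ into classes indexed by $\mathbb{F}_p$. Since $\textup{Supp}(\textbf{c}')\subseteq \textup{Supp}(\textbf{c})$, for every coordinate $i\in\textup{Supp}(\textbf{c})$ there is a unique $\lambda\in\mathbb{F}_p$ with $c'_i=\lambda c_i$ (the value $\lambda=0$ being attained exactly at those $i\in\textup{Supp}(\textbf{c})\setminus\textup{Supp}(\textbf{c}')$, where $c_i\ne 0$ and $c'_i=0$). Letting $N_\lambda$ be the number of coordinates lying in the class of $\lambda$, we get $\sum_{\lambda\in\mathbb{F}_p}N_\lambda=w$. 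The key step is then to read off, for each $\lambda$, the weight of the codeword $\textbf{c}'-\lambda\textbf{c}$: outside $\textup{Supp}(\textbf{c})$ both $\textbf{c}$ and $\textbf{c}'$ vanish, while inside $\textup{Supp}(\textbf{c})$ the coordinate $c'_i-\lambda c_i$ is zero precisely on the $\lambda$-class, so $wt(\textbf{c}'-\lambda\textbf{c})=w-N_\lambda$. Because $\textbf{c}'$ is not a multiple of $\textbf{c}$, the codeword $\textbf{c}'-\lambda\textbf{c}$ is nonzero for every one of the $p$ values $\lambda\in\mathbb{F}_p$, hence has weight at least $w_{min}$; therefore $N_\lambda\le w-w_{min}$.

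Finally I would sum this bound over $\lambda\in\mathbb{F}_p$, obtaining $w=\sum_{\lambda}N_\lambda\le p\,(w-w_{min})$, which rearranges to $w_{min}\le \tfrac{p-1}{p}\,w\le \tfrac{p-1}{p}\,w_{max}$, contradicting the hypothesis $w_{min}/w_{max}>(p-1)/p$. Hence no such $\textbf{c}'$ exists and every nonzero codeword of $C$ is minimal. The argument is short; the only points calling for care are the exact weight identity $wt(\textbf{c}'-\lambda\textbf{c})=w-N_\lambda$, which is where the containment $\textup{Supp}(\textbf{c}')\subseteq\textup{Supp}(\textbf{c})$ is used in an essential way, and the verification that $\textbf{c}'-\lambda\textbf{c}$ is genuinely nonzero for all $\lambda$, including $\lambda=0$. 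Everything else is a one-line count over the $p$ residues.
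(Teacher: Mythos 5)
Your proof is correct: the partition of $\textup{Supp}(\textbf{c})$ into the classes $N_\lambda$ according to the ratio $c'_i/c_i$, the identity $wt(\textbf{c}'-\lambda\textbf{c})=w-N_\lambda$, the observation that $\textbf{c}'-\lambda\textbf{c}\ne \textbf{0}$ for all $p$ values of $\lambda$ (including $\lambda=0$), and the summation $w=\sum_\lambda N_\lambda\le p(w-w_{min})$ together give exactly the contradiction you state. Note that the paper itself offers no proof of this lemma --- it is quoted as a known result from the Ashikhmin--Barg reference --- so there is nothing internal to compare against; your argument is the standard one from that source, and it is complete as written.
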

				If $m \geq 3$, the above inequality holds in Lemma \ref{abbound} for $C_{D_i}, i=1,2,3$. Hence, the linear code we constructed is minimal when $m \geq 3$. Therefore, utilizing the dual code $C_{D_i}^{\perp},i=1,2,3$ can be used to create secret sharing schemes.

				\subsection{Applications to $s$-sum sets}
				
				In 1984, Courteau and Wolfmann introduced
				the notion of the triple-sum set \cite{wolf1984}, a natural generalization of partial difference sets,
				which enter the study of two-weight codes \cite{cal1986, ma1994}. Recently, Shi and Sol\'{e} generalized
				the notion of the triple-sum set to the $s$-sum set ($s> 1$), and they gave the connection among  $s$-sum sets, $s$-strongly walk-regular graphs and three-weight codes \cite{kie2020, shi2019}.

				Some notions and results for the  $s$-sum set are given as follows \cite{shi2019}.
				Let  $ \mathbb{ F}_q$ denote the finite field of order
				$q$. The set $\Omega 
				\subsetneq \mathbb{ F}_q^k$ is a $s$-sum set if it is stable by scalar multiplication and if there
				are constants $\sigma_0$ and $\sigma_1$ such that a nonzero $\textbf{h} \in  \mathbb{ F}_q^k$ can be written as
				\begin{align*}
					\textbf{h}=\sum_{i=1}^{s}\textbf{x}_i, \textbf{x}_i \in \Omega,
				\end{align*}
				$\sigma_0$ times if $\textbf{h} \in \Omega$, and $\sigma_1$  times if $\textbf{h} \in \mathbb{ F}_q^k \setminus \Omega$. 
				If $\Omega$ and  $\textbf{0} \notin \Omega$, we denote by $C(\Omega)$ the projective code of length $n = \dfrac{|\Omega|}{q-1} $ obtained
				as the kernel of the $k \times n$  matrix $H$ with columns the projectively non-equivalent nonzero
				elements of $\Omega$. Thus $H$ is the check matrix of $C(\Omega)$.
				The following result is given in \cite{shi2019}.
				
				\begin{lemma}\label{ $s$-sum}
					Assume that $C(\Omega)^{\perp}$ is of length $n$ and has three nonzero weight $\omega_1 < \omega_2=\dfrac{n(q-1)}{q} < \omega_3$, with $\omega_1+\omega_3=\dfrac{2n(q-1)}{q}$. Then $\Omega$ is an $s$-sum set for any odd $s > 1$.
				\end{lemma}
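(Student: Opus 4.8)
The plan is to count, for each nonzero $\textbf{h}\in\mathbb{F}_q^k$, the number $N_s(\textbf{h})$ of ordered $s$-tuples $(\textbf{x}_1,\dots,\textbf{x}_s)\in\Omega^s$ with $\textbf{x}_1+\cdots+\textbf{x}_s=\textbf{h}$, and to show that $N_s(\textbf{h})$ takes one constant value when $\textbf{h}\in\Omega$ and another constant value when $\textbf{h}\notin\Omega$; together with the assumed stability of $\Omega$ under scalar multiplication this is exactly the definition of an $s$-sum set. First I would fix a nontrivial additive character $\psi$ of $\mathbb{F}_q$ and use the orthogonality relation $\frac{1}{q^k}\sum_{\textbf{y}}\psi(\langle\textbf{y},\textbf{v}\rangle)=\mathbf{1}_{\{\textbf{v}=\textbf{0}\}}$ (with $\langle\cdot,\cdot\rangle$ the standard bilinear form on $\mathbb{F}_q^k$) to write
\[
N_s(\textbf{h})=\frac{1}{q^k}\sum_{\textbf{y}\in\mathbb{F}_q^k}\overline{\psi(\langle\textbf{y},\textbf{h}\rangle)}\,\Phi(\textbf{y})^s,\qquad \Phi(\textbf{y}):=\sum_{\textbf{x}\in\Omega}\psi(\langle\textbf{y},\textbf{x}\rangle).
\]

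The key step is to recognise $\Phi(\textbf{y})$ as a shifted and rescaled codeword weight of $C(\Omega)^{\perp}$. Since $\Omega$ is scalar-stable and $\textbf{0}\notin\Omega$, it is the disjoint union of the $n$ punctured lines $\mathbb{F}_q^*\textbf{x}_1,\dots,\mathbb{F}_q^*\textbf{x}_n$, where $\textbf{x}_1,\dots,\textbf{x}_n$ are the columns of the check matrix $H$ of $C(\Omega)$. Splitting the sum defining $\Phi(\textbf{y})$ over these lines and using $\sum_{\lambda\in\mathbb{F}_q^*}\psi(\lambda t)=q-1$ or $-1$ according as $t=0$ or $t\neq0$ gives $\Phi(\textbf{y})=(q-1)n-q\cdot\mathrm{wt}(\textbf{c}_{\textbf{y}})$, where $\textbf{c}_{\textbf{y}}=\textbf{y}H$ ranges over $C(\Omega)^{\perp}$ as $\textbf{y}$ ranges over $\mathbb{F}_q^k$ (under the standing assumption that $H$ has full row rank, i.e. $\Omega$ spans $\mathbb{F}_q^k$). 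Substituting the three admissible nonzero weights and using $\omega_2=n(q-1)/q$ then shows that for $\textbf{y}\neq\textbf{0}$ the sum $\Phi(\textbf{y})$ takes exactly the values $\theta:=(q-1)n-q\omega_1$ on the $\omega_1$-codewords, $0$ on the $\omega_2$-codewords, and $(q-1)n-q\omega_3$ on the $\omega_3$-codewords; the hypothesis $\omega_1+\omega_3=2n(q-1)/q$ forces this last value to equal $-\theta$, while $\Phi(\textbf{0})=(q-1)n$.

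Now the oddness of $s$ does the work: since $\Phi(\textbf{y})\in\{\theta,0,-\theta\}$ for every $\textbf{y}\neq\textbf{0}$ and $s$ is odd, $\Phi(\textbf{y})^s=\theta^{s-1}\Phi(\textbf{y})$ for all such $\textbf{y}$, so only the single term $\textbf{y}=\textbf{0}$ escapes this identity. Splitting off that term and applying Fourier inversion in the form $\tfrac{1}{q^k}\sum_{\textbf{y}}\overline{\psi(\langle\textbf{y},\textbf{h}\rangle)}\,\Phi(\textbf{y})=\mathbf{1}_{\Omega}(\textbf{h})$, I would arrive at
\[
N_s(\textbf{h})=\theta^{s-1}\mathbf{1}_{\Omega}(\textbf{h})+\frac{(n(q-1))^s-\theta^{s-1}n(q-1)}{q^k}.
\]
Hence $N_s(\textbf{h})=\sigma_1$ for $\textbf{h}\in\mathbb{F}_q^k\setminus\Omega$ and $N_s(\textbf{h})=\sigma_1+\theta^{s-1}=:\sigma_0$ for $\textbf{h}\in\Omega$, where $\sigma_1:=\big((n(q-1))^s-\theta^{s-1}n(q-1)\big)/q^k$; these are constants independent of $\textbf{h}$, so with the given scalar-stability this proves that $\Omega$ is an $s$-sum set for every odd $s>1$.

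I expect the one delicate point to be the middle step: correctly matching the three code weights to the three values of $\Phi$ and, above all, invoking $\omega_1+\omega_3=2n(q-1)/q$ to see that the two nonzero values of $\Phi$ off the origin are negatives of one another — this antisymmetry is precisely what makes $\Phi^{s}$ proportional to $\Phi$ when $s$ is odd, and it is why the hypotheses on $\omega_1,\omega_3$ and on $\omega_2$ cannot be dropped. A minor technicality is the identification $\textbf{c}_{\textbf{y}}=\textbf{y}H$ as a bijection $\mathbb{F}_q^k\to C(\Omega)^{\perp}$, which I would simply note holds once $\Omega$ spans $\mathbb{F}_q^k$; everything else is a routine character-sum computation.
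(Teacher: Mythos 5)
Your argument is correct, but note that the paper does not prove this lemma at all: it is quoted verbatim from Shi and Sol\'e \cite{shi2019}, so there is no in-paper proof to match. What you have supplied is a self-contained character-sum proof, and it checks out: the identity $\Phi(\textbf{y})=(q-1)n-q\,\mathrm{wt}(\textbf{y}H)$ is right, the hypothesis $\omega_2=n(q-1)/q$ kills the middle value, $\omega_1+\omega_3=2n(q-1)/q$ gives the antisymmetry $\{\theta,0,-\theta\}$, and oddness of $s$ then linearizes $\Phi^s=\theta^{s-1}\Phi$ away from $\textbf{y}=\textbf{0}$, so Fourier inversion yields representation numbers that are constant on $\Omega$ and on its complement. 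This is in substance the same mechanism as in Shi--Sol\'e, but packaged differently: their route goes through the coset graph of the dual code and the characterization of $s$-strongly walk-regular graphs by the eigenvalue condition $\theta_1+\theta_3=2\theta_2$ (with $\theta_2=0$ here), whereas yours avoids the graph-theoretic layer entirely and works directly with the exponential sums, which is more elementary and makes the role of each hypothesis transparent. Two small points you should make explicit rather than leave as asides: (i) the bijection $\textbf{y}\mapsto\textbf{y}H$ onto $C(\Omega)^{\perp}$ requires $\Omega$ to span $\mathbb{F}_q^k$ (equivalently $\dim C(\Omega)^{\perp}=k$); without it, nonzero $\textbf{y}$ orthogonal to $\Omega$ give $\Phi(\textbf{y})=(q-1)n$ and the constancy off $\Omega$ fails, so this hypothesis is genuinely needed (it holds in the paper's applications, where the columns come from a generator matrix of full rank); (ii) the scalar-stability of $\Omega$, which the definition of an $s$-sum set demands, is an assumption of the setting (trivial for $q=2$), not something your computation produces. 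With those caveats stated, the proof is complete and arguably more informative than the citation, since it also exhibits the constants $\sigma_0-\sigma_1=\theta^{s-1}$ explicitly.
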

				
				Now we can construct two $s$-sum sets  for any odd $s >1$ from $C_{D_1}$ in Theorem \ref{thm1} and $C_{D_2}$ in Theorem \ref{Thm2}.
				
				Let $\{\mu_1,\dots ,\mu_m\}$ be a basis of $\mathbb{ F}_{2^m}$ over $\mathbb{ F}_2$.
				For $n=\big|D_1\big|=2^{2m-1}$ and $D_1=\{(x_1,y_1,),\dots,(x_n,y_n)\}$, then $C_{D_1}$ has generator matrix $G=(\textbf{g}_1,\dots,\textbf{g}_n)$, where
				\begin{align*}
					\textbf{g}_i=\big(\textup{Tr}(\mu_1y_ix_i^2),\dots,\textup{Tr}(\mu_my_ix_i^2),\textup{Tr}(\mu_1y_i),\dots,\textup{Tr}(\mu_my_i)\big)^T.
				\end{align*}
				Let 
				\begin{align*}
					\Omega_{D_1}=\{\textbf{g}_i:i=1,\dots,n\} \subsetneq \mathbb{ F}_2^{2m}.
				\end{align*}
				For $m$ odd, $n=\big|D_2\big|=2^{m}(2^{m-1}-1)$ and $D_2=\{(x'_1,y'_1,),\dots,(x'_n,y'_n)\}$, then $C_{D_2}$ has generator matrix $G=(\textbf{g}'_1,\dots,\textbf{g}'_n)$, where
				\begin{align*}
					\textbf{g}'_i=\big(\textup{Tr}(\mu_1y'_i{x'}_i^2),\dots,\textup{Tr}(\mu_my'_i{x'}_i^2),\textup{Tr}(\mu_1x'_i),\dots,\textup{Tr}(\mu_mx'_i),\textup{Tr}(\mu_1y'_i),\dots,\textup{Tr}(\mu_my'_i)\big)^T.
				\end{align*}
				Let
				\begin{align*}
					\Omega_{D_2}=\{\textbf{g}'_i:i=1,\dots,n\} \subsetneq \mathbb{ F}_2^{3m}.
				\end{align*}
				Then we have the following theorems.
				\begin{theorem}
					$\Omega_{D_1}  $ is an $s$-sum set for any odd $s > 1$.
				\end{theorem}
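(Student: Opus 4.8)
The plan is to deduce the statement from Lemma~\ref{ $s$-sum} with $q=2$ applied to $\Omega_{D_1}\subsetneq\mathbb{F}_2^{2m}$. Concretely, I would show that $C(\Omega_{D_1})^{\perp}$ coincides with the code $C_{D_1}$ of Theorem~\ref{thm1}, and then verify the two arithmetic relations that the lemma imposes on its nonzero weights.

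First I would note that, since $\Omega_{D_1}=\{\textbf{g}_1,\dots,\textbf{g}_n\}$ is the set of columns of the generator matrix $G$ of $C_{D_1}$, taking $H=G$ exhibits $G$ as a check matrix of $C(\Omega_{D_1})$, whence $C(\Omega_{D_1})^{\perp}$ is the code generated by $G$, that is, $C_{D_1}$. For this identification to be legitimate one needs $\textbf{0}\notin\Omega_{D_1}$, the $\textbf{g}_i$ pairwise distinct (so that $|\Omega_{D_1}|=n=2^{2m-1}$ and $C(\Omega_{D_1})^{\perp}$ has length exactly $n$), and $\Omega_{D_1}$ stable under scalar multiplication. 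The last is automatic over $\mathbb{F}_2$, and the first two are precisely the assertion that $C_{D_1}$ is projective, which is part of Theorem~\ref{thm1}: over $\mathbb{F}_2$, $d^{\perp}\ge 3$ for $C_{D_1}$ is equivalent to the columns of $G$ being pairwise distinct and all nonzero. So all the set-theoretic hypotheses of Lemma~\ref{ $s$-sum} are in place.

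Next I would substitute the data from Table~\ref{table1}. The three nonzero weights of $C_{D_1}=C(\Omega_{D_1})^{\perp}$ are
\[
\omega_1=2^{m-1}(2^{m-1}-1)<\omega_2=2^{2m-2}<\omega_3=2^{m-1}(2^{m-1}+1),
\]
each occurring with positive multiplicity. Taking $q=2$ and $n=2^{2m-1}$, one checks immediately that $\dfrac{n(q-1)}{q}=2^{2m-2}=\omega_2$ and that $\omega_1+\omega_3=2^{m-1}\cdot 2^{m}=2^{2m-1}=\dfrac{2n(q-1)}{q}$. Hence both numerical hypotheses of Lemma~\ref{ $s$-sum} are satisfied, and the lemma yields that $\Omega_{D_1}$ is an $s$-sum set for every odd $s>1$.

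The argument is essentially bookkeeping; the only genuinely non-formal step is the identification $C(\Omega_{D_1})^{\perp}=C_{D_1}$ — in particular the facts $\textbf{0}\notin\Omega_{D_1}$ and $|\Omega_{D_1}|=n$ — and this is delivered directly by the projectivity of $C_{D_1}$ already proved in Theorem~\ref{thm1}. I therefore do not anticipate a substantive obstacle.
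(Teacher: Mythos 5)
Your proposal is correct and follows essentially the same route as the paper: identify $C(\Omega_{D_1})^{\perp}$ with $C_{D_1}$ (using the projectivity from Theorem \ref{thm1} to ensure the columns $\textbf{g}_i$ are nonzero and pairwise distinct), then verify $\omega_2=\dfrac{n}{2}$ and $\omega_1+\omega_3=n$ and invoke Lemma \ref{ $s$-sum}. Your write-up is in fact slightly more careful than the paper's about why the set-theoretic hypotheses of the lemma hold; no gaps.
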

				\begin{proof}
					$C_{D_1}$ is projective, thus $\textbf{g}_i \ne \textbf{g}_j$ for $i \ne j $, and then the weight distribution for $C(\Omega_{D_1})$ is the same as that of $C_{D_1}^{\perp}$. So that  $C(\Omega_{D_1})^{\perp}$ is a three-weight code with three nonzero weights
					\begin{align*}
						\omega_1=2^{m-1}(2^{m-1}-1), \omega_2= 2^{2m-2}, \omega_3=2^{m-1}(2^{m-1}+1),
					\end{align*}
					and then  $\omega_1 < \omega_2=\dfrac{n}{2} < \omega_3$ and $\omega_1+\omega_3=n$, from Lemma \ref{ $s$-sum},  $\Omega_{D_1} \cup \{\textbf{0}\} $ is an $s$-sum set for any odd $s > 1$.
				\end{proof}

				\begin{theorem}
					For $m$ odd, $\Omega_{D_2}  $ is an $s$-sum set for any odd $s > 1$.
				\end{theorem}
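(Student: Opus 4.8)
The plan is to repeat, almost verbatim, the argument already given for $\Omega_{D_1}$, substituting the weight data of $C_{D_2}$ recorded in Table \ref{table2}. First I would invoke Theorem \ref{Thm2}: for $m$ odd the code $C_{D_2}$ is projective, so $d^{\perp}\ge 3$ and in particular the columns $\textbf{g}'_1,\dots,\textbf{g}'_n$ of the generator matrix $G=(\textbf{g}'_1,\dots,\textbf{g}'_n)$ are pairwise distinct and nonzero, hence they are exactly the projectively non-equivalent nonzero elements of $\Omega_{D_2}$ (over $\mathbb{F}_2$ projective non-equivalence is just distinctness). Consequently $G$ plays the role of the check matrix $H$ in Lemma \ref{ $s$-sum}, so $C(\Omega_{D_2})$ has the same weight distribution as $C_{D_2}^{\perp}$, and therefore $C(\Omega_{D_2})^{\perp}$ has the same weight distribution as $C_{D_2}$.

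Next I would read the three nonzero weights of $C_{D_2}$ off Table \ref{table2}, namely $\omega_1=2^{m}(2^{m-2}-1)$, $\omega_2=2^{m-1}(2^{m-1}-1)$, $\omega_3=2^{2m-2}$, and verify the numerical hypotheses of Lemma \ref{ $s$-sum} with $q=2$ and $n=|D_2|=2^{m}(2^{m-1}-1)$. Rewriting the weights in the form $\omega_1=2^{2m-2}-2^m$, $\omega_2=2^{2m-2}-2^{m-1}$, $\omega_3=2^{2m-2}$ makes the ordering $\omega_1<\omega_2<\omega_3$ obvious; moreover $\omega_2=2^{2m-2}-2^{m-1}=n/2=\dfrac{n(q-1)}{q}$, and $\omega_1+\omega_3=(2^{2m-2}-2^m)+2^{2m-2}=2^{2m-1}-2^m=n=\dfrac{2n(q-1)}{q}$. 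So all conditions of Lemma \ref{ $s$-sum} are satisfied.

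Applying Lemma \ref{ $s$-sum} then gives that $\Omega_{D_2}\cup\{\textbf{0}\}$ is an $s$-sum set for every odd $s>1$, which is the claim. I do not expect a real obstacle: the proof is essentially bookkeeping, and the only substantive input is Theorem \ref{Thm2} itself — in particular the fact that $C_{D_2}$ is projective (so that passing from $C_{D_2}$ to $C(\Omega_{D_2})^{\perp}$ preserves the weight distribution), which rests on the Pless power moments of Lemma \ref{Pless}. Beyond that, the two identities $\omega_2=n/2$ and $\omega_1+\omega_3=n$ are immediate once the weights are written in the $2^{2m-2}-2^{j}$ form, and the hypothesis that $m$ is odd is needed only to have Theorem \ref{Thm2} available in the first place.
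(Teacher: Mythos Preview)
Your proposal is correct and follows essentially the same route as the paper: invoke Theorem \ref{Thm2} to get projectivity (hence distinct nonzero columns), identify $C(\Omega_{D_2})^{\perp}$ with $C_{D_2}$, read off the three weights from Table \ref{table2}, and verify the hypotheses $\omega_2=n/2$ and $\omega_1+\omega_3=n$ of Lemma \ref{ $s$-sum}. Your rewriting of the weights as $2^{2m-2}-2^{j}$ makes the bookkeeping slightly cleaner than in the paper, but the argument is otherwise identical.
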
 
				
				\begin{proof}
					When $m$ is odd, $C_{D_2}$ is projective, thus $\textbf{g}'_i \ne \textbf{g}'_j$ for $i \ne j $, and then the weight distribution for $C(\Omega_{D_2})$ is the same as that of $C_{D_2}^{\perp}$. So that  $C(\Omega_{D_2})^{\perp}$ is a three-weight code with three nonzero weights
					\begin{align*}
						\omega_1=2^{m}(2^{m-2}-1), \omega_2= 2^{m-1}(2^{m-1}-1), \omega_3=2^{2m-2},
					\end{align*}
					and then  $\omega_1 < \omega_2=\dfrac{n}{2} < \omega_3$ and $\omega_1+\omega_3=n$, from Lemma \ref{ $s$-sum},  $\Omega_{D_1} \cup \{\textbf{0}\} $ is an $s$-sum set for any odd $s > 1$.
				\end{proof}


				\section{Conclusion}
				In this paper, three new classes of projective three-weight and
				four-weight linear codes over $\mathbb{ F}_2$ were constructed. Then  $s$-sum sets for any odd $s>1$ were constructed by using the projective three-weight codes proposed in this paper.
				As a further work, we will find more few-weight codes by using different defining sets. Also we will attempt to construct new classes of  minimal codes, which have applications in the secret sharing by using the similar method in this paper and the  projective three-weight codes can be used to construct $s$-sum sets for any odd $s > 1$.

				\section*{Acknowledgements}
				
				The authors would like to thank the reviewers and editors for their
				detailed and constructive comments, which substantially improved the
				presentation of the paper. This work was supported by National Natural
				Science Foundation of China (No. 62272420), Natural Science Foundation of Fujian Province (No. 2023J01535).

		\end{document}